\documentclass[%
 twocolumn,
 aps,pra,
 amsmath,amssymb,
 superscriptaddress,
 nofootinbib,
 reprint,%
]{revtex4-1}
\usepackage{stmaryrd}
\usepackage{color}
\usepackage{array}
\usepackage{enumitem}
\usepackage{mathpazo}
\usepackage{setspace}
\usepackage{bm}
\usepackage{amssymb}
\usepackage{amsthm}
\usepackage{mathtools}
\usepackage{multirow}
\usepackage{cases}
\usepackage[colorlinks=true,linkcolor=blue,citecolor=blue,pdfauthor={ },pdftitle={ },pdfsubject={ },pdfkeywords={ }]{hyperref}
\usepackage{pgfplots}
\usepackage{lipsum}
\usepackage{youngtab}



\newtheorem{definition}{Definition}
\newtheorem{proposition}[definition]{Proposition}
\newtheorem{lemma}[definition]{Lemma}

\newtheorem{fact}[definition]{Fact}
\newtheorem{theorem}[definition]{Theorem}
\newtheorem{observation}[definition]{Observation}
\newtheorem{corollary}[definition]{Corollary}
\newtheorem{conjecture}[definition]{Conjecture}

\newtheorem{remark}[definition]{Remark}
\newtheorem{example}[definition]{Example}
\newtheorem{question}[definition]{Question}

\def\Dbar{\leavevmode\lower.6ex\hbox to 0pt
{\hskip-.23ex\accent"16\hss}D}
\makeatletter
\def\url@leostyle{%
  \@ifundefined{selectfont}{\def\UrlFont{\sf}}{\def\UrlFont{\small\ttfamily}}}
\makeatother
\urlstyle{leo}

\DeclareMathOperator{\Tr}{Tr} %

\def\bcj{\begin{conjecture}}
\def\ecj{\end{conjecture}}
\def\bcr{\begin{corollary}}
\def\ecr{\end{corollary}}
\def\bd{\begin{definition}}
\def\ed{\end{definition}}
\def\bea{\begin{eqnarray}}
\def\eea{\end{eqnarray}}
\def\bem{\begin{enumerate}}
\def\eem{\end{enumerate}}
\def\bex{\begin{example}}
\def\eex{\end{example}}
\def\bim{\begin{itemize}}
\def\eim{\end{itemize}}
\def\bl{\begin{lemma}}
\def\el{\end{lemma}}
\def\bpf{\begin{proof}}
\def\epf{\end{proof}}
\def\bpp{\begin{proposition}}
\def\epp{\end{proposition}}
\def\bqu{\begin{question}}
\def\equ{\end{question}}
\def\br{\begin{remark}}
\def\er{\end{remark}}
\def\bt{\begin{theorem}}
\def\et{\end{theorem}}

\def\btb{\begin{tabular}}
\def\etb{\end{tabular}}

\newcommand{\nc}{\newcommand}


 \nc{\bA}{{\bf A}} \nc{\bB}{{\bf B}} \nc{\bC}{{\bf C}}
 \nc{\bD}{{\bf D}} \nc{\bE}{{\bf E}} \nc{\bF}{{\bf F}}
 \nc{\bG}{{\bf G}} \nc{\bH}{{\bf H}} \nc{\bI}{{\bf I}}
 \nc{\bJ}{{\bf J}} \nc{\bK}{{\bf K}} \nc{\bL}{{\bf L}}
 \nc{\bM}{{\bf M}} \nc{\bN}{{\bf N}} \nc{\bO}{{\bf O}}
 \nc{\bP}{{\bf P}} \nc{\bQ}{{\bf Q}} \nc{\bR}{{\bf R}}
 \nc{\bS}{{\bf S}} \nc{\bT}{{\bf T}} \nc{\bU}{{\bf U}}
 \nc{\bV}{{\bf V}} \nc{\bW}{{\bf W}} \nc{\bX}{{\bf X}}
 \nc{\bZ}{{\bf Z}}


\nc{\cA}{{\cal A}} \nc{\cB}{{\cal B}} \nc{\cC}{{\cal C}}
\nc{\cD}{{\cal D}} \nc{\cE}{{\cal E}} \nc{\cF}{{\cal F}}
\nc{\cG}{{\cal G}} \nc{\cH}{{\cal H}} \nc{\cI}{{\cal I}}
\nc{\cJ}{{\cal J}} \nc{\cK}{{\cal K}} \nc{\cL}{{\cal L}}
\nc{\cM}{{\cal M}} \nc{\cN}{{\cal N}} \nc{\cO}{{\cal O}}
\nc{\cP}{{\cal P}} \nc{\cQ}{{\cal Q}} \nc{\cR}{{\cal R}}
\nc{\cS}{{\cal S}} \nc{\cT}{{\cal T}} \nc{\cU}{{\cal U}}
\nc{\cV}{{\cal V}} \nc{\cW}{{\cal W}} \nc{\cX}{{\cal X}}
\nc{\cZ}{{\cal Z}}


\nc{\hA}{{\hat{A}}} \nc{\hB}{{\hat{B}}} \nc{\hC}{{\hat{C}}}
\nc{\hD}{{\hat{D}}} \nc{\hE}{{\hat{E}}} \nc{\hF}{{\hat{F}}}
\nc{\hG}{{\hat{G}}} \nc{\hH}{{\hat{H}}} \nc{\hI}{{\hat{I}}}
\nc{\hJ}{{\hat{J}}} \nc{\hK}{{\hat{K}}} \nc{\hL}{{\hat{L}}}
\nc{\hM}{{\hat{M}}} \nc{\hN}{{\hat{N}}} \nc{\hO}{{\hat{O}}}
\nc{\hP}{{\hat{P}}} \nc{\hR}{{\hat{R}}} \nc{\hS}{{\hat{S}}}
\nc{\hT}{{\hat{T}}} \nc{\hU}{{\hat{U}}} \nc{\hV}{{\hat{V}}}
\nc{\hW}{{\hat{W}}} \nc{\hX}{{\hat{X}}} \nc{\hZ}{{\hat{Z}}}

\newcommand{\bra}[1]{\langle#1|}
\newcommand{\ket}[1]{|#1\rangle}





\def\Dbar{\leavevmode\lower.6ex\hbox to 0pt
{\hskip-.23ex\accent"16\hss}D}
\begin{document}


\def\be{\begin{eqnarray}}
\def\ee{\end{eqnarray}}


\newcommand{\ca}{\mathcal A}

\newcommand{\cb}{\mathcal B}
\newcommand{\cc}{\mathcal C}
\newcommand{\cd}{\mathcal D}
\newcommand{\ce}{\mathcal E}
\newcommand{\cf}{\mathcal F}
\newcommand{\cg}{\mathcal G}
\newcommand{\ch}{\mathcal H}
\newcommand{\ci}{\mathcal I}
\newcommand{\cj}{\mathcal J}
\newcommand{\ck}{\mathcal K}
\newcommand{\cl}{\mathcal L}
\newcommand{\cm}{\mathcal M}
\newcommand{\cn}{\mathcal N}
\newcommand{\co}{\mathcal O}
\newcommand{\cp}{\mathcal P}
\newcommand{\cq}{\mathcal Q}
\newcommand{\calr}{\mathcal R}
\newcommand{\cs}{\mathcal S}
\newcommand{\ct}{\mathcal T}
\newcommand{\cu}{\mathcal U}
\newcommand{\cv}{\mathcal V}
\newcommand{\cw}{\mathcal W}
\newcommand{\cx}{\mathcal X}
\newcommand{\cy}{\mathcal Y}
\newcommand{\cz}{\mathcal Z}


\newcommand{\sa}{\mathscr{A}}
\newcommand{\sm}{\mathscr{M}}


\newcommand{\fa}{\mathfrak{a}}  \newcommand{\Fa}{\mathfrak{A}}
\newcommand{\fb}{\mathfrak{b}}  \newcommand{\Fb}{\mathfrak{B}}
\newcommand{\fc}{\mathfrak{c}}  \newcommand{\Fc}{\mathfrak{C}}
\newcommand{\fd}{\mathfrak{d}}  \newcommand{\Fd}{\mathfrak{D}}
\newcommand{\fe}{\mathfrak{e}}  \newcommand{\Fe}{\mathfrak{E}}
\newcommand{\ff}{\mathfrak{f}}  \newcommand{\Ff}{\mathfrak{F}}
\newcommand{\fg}{\mathfrak{g}}  \newcommand{\Fg}{\mathfrak{G}}
\newcommand{\fh}{\mathfrak{h}}  \newcommand{\Fh}{\mathfrak{H}}
\newcommand{\fraki}{\mathfrak{i}}       \newcommand{\Fraki}{\mathfrak{I}}
\newcommand{\fj}{\mathfrak{j}}  \newcommand{\Fj}{\mathfrak{J}}
\newcommand{\fk}{\mathfrak{k}}  \newcommand{\Fk}{\mathfrak{K}}
\newcommand{\fl}{\mathfrak{l}}  \newcommand{\Fl}{\mathfrak{L}}
\newcommand{\fm}{\mathfrak{m}}  \newcommand{\Fm}{\mathfrak{M}}
\newcommand{\fn}{\mathfrak{n}}  \newcommand{\Fn}{\mathfrak{N}}
\newcommand{\fo}{\mathfrak{o}}  \newcommand{\Fo}{\mathfrak{O}}
\newcommand{\fp}{\mathfrak{p}}  \newcommand{\Fp}{\mathfrak{P}}
\newcommand{\fq}{\mathfrak{q}}  \newcommand{\Fq}{\mathfrak{Q}}
\newcommand{\fr}{\mathfrak{r}}  \newcommand{\Fr}{\mathfrak{R}}
\newcommand{\fs}{\mathfrak{s}}  \newcommand{\Fs}{\mathfrak{S}}
\newcommand{\ft}{\mathfrak{t}}  \newcommand{\Ft}{\mathfrak{T}}
\newcommand{\fu}{\mathfrak{u}}  \newcommand{\Fu}{\mathfrak{U}}
\newcommand{\fv}{\mathfrak{v}}  \newcommand{\Fv}{\mathfrak{V}}
\newcommand{\fw}{\mathfrak{w}}  \newcommand{\Fw}{\mathfrak{W}}
\newcommand{\fx}{\mathfrak{x}}  \newcommand{\Fx}{\mathfrak{X}}
\newcommand{\fy}{\mathfrak{y}}  \newcommand{\Fy}{\mathfrak{Y}}
\newcommand{\fz}{\mathfrak{z}}  \newcommand{\Fz}{\mathfrak{Z}}

\newcommand{\cfg}{\dot \fg}
\newcommand{\cFg}{\dot \Fg}
\newcommand{\ccg}{\dot \cg}
\newcommand{\circj}{\dot {\mathbf J}}
\newcommand{\circs}{\circledS}
\newcommand{\jmot}{\mathbf J^{-1}}


\newcommand{\rmd}{\mathrm d}
\newcommand{\mca}{\ ^-\!\!\ca}
\newcommand{\pca}{\ ^+\!\!\ca}
\newcommand{\peq}{^\Psi\!\!\!\!\!=}
\newcommand{\lt}{\left}
\newcommand{\rt}{\right}
\newcommand{\HN}{\hat{H}(N)}
\newcommand{\HM}{\hat{H}(M)}
\newcommand{\Hv}{\hat{H}_v}
\newcommand{\cyl}{\mathbf{Cyl}}
\newcommand{\lag}{\left\langle}
\newcommand{\rag}{\right\rangle}
\newcommand{\Ad}{\mathrm{Ad}}
\newcommand{\trace}{\mathrm{tr}}
\newcommand{\bbc}{\mathbb{C}}
\newcommand{\AC}{\overline{\mathcal{A}}^{\mathbb{C}}}
\newcommand{\Ar}{\mathbf{Ar}}
\newcommand{\uc}{\mathrm{U(1)}^3}
\newcommand{\M}{\hat{\mathbf{M}}}
\newcommand{\spin}{\text{Spin(4)}}
\newcommand{\id}{\mathrm{id}}
\newcommand{\Pol}{\mathrm{Pol}}
\newcommand{\Fun}{\mathrm{Fun}}
\newcommand{\bp}{p}
\newcommand{\act}{\rhd}
\newcommand{\data}{\lt(j_{ab},A,\bar{A},\xi_{ab},z_{ab}\rt)}
\newcommand{\datao}{\lt(j^{(0)}_{ab},A^{(0)},\bar{A}^{(0)},\xi_{ab}^{(0)},z_{ab}^{(0)}\rt)}
\newcommand{\deltadata}{\lt(j'_{ab}, A',\bar{A}',\xi_{ab}',z_{ab}'\rt)}
\newcommand{\background}{\lt(j_{ab}^{(0)},g_a^{(0)},\xi_{ab}^{(0)},z_{ab}^{(0)}\rt)}
\newcommand{\sgn}{\mathrm{sgn}}
\newcommand{\vth}{\vartheta}
\newcommand{\rmi}{\mathrm{i}}
\newcommand{\bfmu}{\pmb{\mu}}
\newcommand{\bfnu}{\pmb{\nu}}
\newcommand{\bfm}{\mathbf{m}}
\newcommand{\bfn}{\mathbf{n}}


\newcommand{\sz}{\mathscr{Z}}
\newcommand{\sk}{\mathscr{K}}

\title{Symmetric vs. bosonic extension for bipartite states}

\author{Youning Li}%
\affiliation{College of Science, China Agricultural University, Beijing, 100080, People's Republic of China}
\affiliation{Department of Physics, Tsinghua University, Beijing, People's Republic of China}%
\affiliation{Collaborative Innovation Center of Quantum Matter, Beijing 100190, People's Republic of China}%
\author{Shilin Huang}
\affiliation{Institute for Interdisciplinary Information Sciences, Tsinghua University, Beijing, People's Republic of China}
\affiliation{Department of Electrical and Computer Enginnering, Duke University, Durham, NC, 27708, USA}

\author{Dong Ruan}
\affiliation{Department of Physics, Tsinghua University, Beijing, People's Republic of China}%
\affiliation{Collaborative Innovation Center of Quantum Matter, Beijing 100190, People's Republic of China}%
\author{Bei Zeng}
\affiliation{Department of Mathematics \& Statistics, University of
  Guelph, Guelph, Ontario, Canada}%
\affiliation{Institute for Quantum Computing and Department of Physics and Astronomy, University of Waterloo, Waterloo, Ontario, Canada}

\date{\today}

\begin{abstract}
A bipartite state $\rho^{AB}$ has a $k$-symmetric extension if there exists a $k+1$-partite state $\rho^{AB_1B_2\ldots B_k}$ with marginals $\rho^{AB_i}=\rho^{AB}, \forall i$. The $k$-symmetric extension is called bosonic
if $\rho^{AB_1B_2\ldots B_k}$ is supported on the symmetric subspace of $B_1B_2\ldots B_k$. Understanding the structure of symmetric/bosonic extension has various applications in the theory of quantum entanglement, quantum key distribution and the quantum marginal problem. In particular, bosonic extension gives a tighter bound for the quantum marginal problem based on seperability. In general, it is known that a $\rho^{AB}$ admitting symmetric extension may not have bosonic extension. In this work, we show that when the dimension of the subsystem $B$ is $2$ (i.e. a qubit), $\rho^{AB}$ admits a $k$-symmetric extension if and only if it has a $k$-bosonic extension. Our result has an immediate application to the quantum marginal problem and indicates a special structure for qubit systems based on group representation theory.
\end{abstract}

\maketitle
\renewcommand\theequation{\arabic{section}.\arabic{equation}}
\setcounter{tocdepth}{4}
\makeatletter
\@addtoreset{equation}{section}
\makeatother

\section{Introduction}


Entanglement is one of the central mysteries of quantum mechanics --
two or more parties can be correlated in the way that is much stronger than they
can be in any classical way~\cite{Horodecki2009}. Due to its striking features, entanglement plays a key role  in many quantum information processing tasks such as teleportation and quantum key distribution~\cite{NC00}. However, while entanglement   has been investigated fairly extensively in the research literature, identifying entangled state remains a challenging task. Indeed, even for  bipartite  quantum systems, there is   no  generic procedure that can tell us whether a given bipartite state is entangled or not. Actually,  the entanglement detection problem has long been known to be NP-hard in general~\cite{gurvits2003classical}.

Consider a bipartite quantum system with Hilbert space
$\mathbb{C}^{d_A}\otimes\mathbb{C}^{d_B}$, here the subsystems are labeled $A$ and $B$.
A  state
$\rho^{AB}$ is separable if it can be written as the convex
combination $\sum_i p_i \rho^{A,i}\otimes \rho^{B,i}$ for a
probability distribution $p_i$ and states $\rho^{A,i}$ and
$\rho^{B,i}$, otherwise it is entangled~\cite{werner1989quantum}. In practice, one typically constructs detection criteria based on simple properties that are obeyed by all separable states, therefore these are necessary but not sufficient conditions for separability. A most favoured approach is  the known partial transpose (PPT) criterion~\cite{Per96,HHH96}.   Another commonly
used method is through the $k$-symmetric extension hierarchy.

A bipartite state $\rho^{AB}$ is $k$-symmetric
extendible if there is a global quantum state
$\rho^{AB_1B_2\cdots B_k}$ whose marginals on $A, B_i$ equal to the
given $\rho_{AB}$ for $i=1,2,\ldots, k$. It was found that, the set of all $k$-extendible
states, denoted by $\Theta_k$, is convex, with a hierarchy structure
$\Theta_k\supset\Theta_{k+1}$, and besides, in the $k\rightarrow\infty$ limit,
$\Theta_{\infty}$ converges exactly to the
set of separable states which is also convex~\cite{doherty02a}.   In other words, separable states are the only states that have $k$-copy symmetric extensions for all $k \ge 2$. This leads to a separability criteria which consists of a hierarchy of tests:   one asks about whether or not a given state belongs to the $k$-extendible set $\Theta_{k}$ for increasing $k$.

A bipartite state $\rho^{AB}$ is $k$-bosonic
extendible if the global quantum state
$\rho^{AB_1B_2\cdots B_k}$ with $\rho^{AB_i}=\rho^{AB}$
is supported on the symmetric subspace of $B_1B_2\ldots B_N$.
Similarly, the set of all all $k$-extendible
states, denoted by $\bar{\Theta}_k$, is convex, with a hierarchy structure
$\bar{\Theta}_k\supset\bar{\Theta}_{k+1}$, and in the $k\rightarrow\infty$ limit,
$\bar{\Theta}_{\infty}=\Theta_{\infty}$ converges also to the
set of separable states. Obviously $\bar{\Theta}_k\subseteq{\Theta}_{k}$, as a seperability
the $k$-bosonic extension is stronger than the $k$-symmetric extension.
Based on $k$-symmetric/bosonic extension, effective numerical tests for separability has been developed~\cite{navascues2009power,brandao2011quasipolynomial,johnston2014detection}.

It is natural to ask whether $\bar{\Theta}_k$ is strictly contained in ${\Theta}_{k}$ for any finite
$k$. It turns out that the answer depends on $k$ and the dimension of the system $B$. It is known that
$\bar{\Theta}_2={\Theta}_{2}$ for $d_A=d_B=2$, and $\bar{\Theta}_2\subset{\Theta}_{2}$ for $d_A=d_B=3$~\cite{myhr2011symmetric}. An example of $\rho^{AB}$ with $2$-symmetric extension that has no bosonic extension can be constructed from an pure $3$-qutrit state $\rho^{AB_1B_2}$ that is supported on the antisymmetric subspace of $B_1B_2$. This may indicate that $\bar{\Theta}_k\subset{\Theta}_{k}$ for $d_B>2$. In this sense, the $d_B=2$ case is of particular interest, given that $\bar{\Theta}_k={\Theta}_{k}$
for $k=2$ and $\infty$. One would naturally wonder whether it is also the case for any other $k$. Our main result of this work, as summarized below, shows it is indeed the case.

\textbf{Main result:} For $d_B=2$, $\rho^{AB}$ admits a $k$-symmetric extension if and only if
it has a $k$-bosonic extension, for any $k$. That is, $\bar{\Theta}_k={\Theta}_{k}$ for $d_B=2$.

This result finds an immediate application to the quantum marginal problem,
also known as the consistency problem, which
asks for the conditions under which there exists an $N$-particle
density matrix $\rho_N$ whose reduced density matrices (quantum
marginals) on the subsets of particles $S_i \subset \{1,2,\ldots, N\}$
equal to the given density matrices $\rho_{S_i}$ for all
$i$~\cite{Kly06}. The related problem in fermionic (bosonic) systems
is the so-called $N$-representability problem,
which inherits a long history in quantum
chemistry~\cite{Col63,Erd72}.

In this sense, the $k$-symmetric extension problem is a special
case of the quantum marginal problem, and the $k$-bosonic extension problem
is intimately related to the $N$-representability problem~\cite{chen2014symmetric}. And it worth mentioning
that the quantum marginal problem and the $N$-representability problem are
in general very difficult. They were shown to be the complete problems
of the complexity class QMA, even for the relatively simple case where
the given marginals are two-particle states~\cite{Liu06,LCV07,WMN10}.
In other words, even with the help of a quantum computer, it is very
unlikely that the quantum marginal problems can be solved efficiently
in the worst case.

An interesting necessary condition of the $k$-symmetric/bosonic extension
problem is derived in~\cite{chen2016detecting}, based on the
separability of $\rho^{AB}$. It shows that if $\rho^{AB}$
has $k$-symmetric extension then the state
$
\tilde{\rho}^{AB}_{k}=
\left(d_B\rho^{A}\otimes I_B+k\rho^{AB}\right)/(d_B^2+k)
$
is separable. This condition can be strengthened if $\rho^{AB}$
has $k$-bosonic extension, where the state
$
\tilde{\rho}^{AB}_{k}=
\left(\rho^{A}\otimes I_B+k\rho^{AB}\right)/(d_B+k)
$
is separable.

Our main result hence has an immediate corollary as summarized below.
And it has shown that this result leads to strong conditions for detecting the consistency
of overlapping marginals~\cite{chen2016detecting}.

\textbf{Corollary} For $d_B=2$, if $\rho^{AB}$
has $k$-symmetric extension, then the state
$
\tilde{\rho}^{AB}_{k}=
\left(\rho^{A}\otimes I_B+k\rho^{AB}\right)/(k+2)
$
is separable.

We organize our paper as follows: in Sec.~\ref{sec:2}, we review some background of the and known results for the relationship between $\bar{\Theta}_k$ and ${\Theta}_{k}$; in Sec.~\ref{sec:3}, we use the $d_B=2, k=3$ case as an example to demonstrate the proof idea of our main result; in Sec.~\ref{sec:4}, we discuss the proof idea in for the general case; some further discussions are given in Sec.~\ref{sec:5}; some technical details of the proof are discussed in the appendices.

\section{Background and previous results}
\label{sec:2}

Consider the following notations
\begin{subequations}
\begin{eqnarray}
\Tr_{B_2B_3\ldots B_k}[\rho^{AB_1\cdots B_k}]&=&\rho^{AB_1},\\
\left(\mathbb{1}^A\otimes P^{ij}\right)\rho^{AB_1\cdots B_k}\left(\mathbb{1}^A\otimes P^{ij}\right)^{\dagger}&=&\rho^{AB_1\cdots B_k},\label{symmetric}
\end{eqnarray}
\end{subequations}
where the operator $P^{ij} \in S_k$ is an element in permutation group $S_k$, which swaps the $i$th subsystem $B_i$ and the $j$th subsystem $B_j$.
The global state $\rho^{AB_1\cdots B_k}$ is called a $k$-symmetric extension of $\rho^{AB_1}$.

Eq.(\ref{symmetric}) requires the global state $\rho^{AB_1\cdots B_k}$ is invariant under any exchange of $B_i$ and $B_j$, but it does not require that $\rho^{AB_1\cdots B_k}$ must support on a subspace with specific permutation symmetry.
e.g. for a $2$-symmetric extendible state, its extension can be bosonic, which supports on the symmetric subspace only, or fermionic, whose support only resides on the antisymmetric subspace, or more generally, can be a mixture of both.

It has already been known that
\begin{fact}\label{2extionsion}
Given any $2$-symmetric extendible state $\rho^{AB}$, if $d_B=2$, then a bosonic extension always exists.\footnote{This claim does not always work when the subsystem has higher dimension.}
\end{fact}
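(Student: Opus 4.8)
The plan is to exploit the very special structure of two qubits: $\mathbb{C}^2\otimes\mathbb{C}^2$ decomposes as the three–dimensional symmetric subspace $\mathrm{Sym}$ together with the one–dimensional antisymmetric subspace spanned by the singlet $\ket{\psi^-}=(\ket{01}-\ket{10})/\sqrt2$. Given a $2$-symmetric extension $\rho^{AB_1B_2}$ of $\rho^{AB}$, the swap symmetry of Eq.~(\ref{symmetric}) says that $\rho^{AB_1B_2}$ commutes with $\mathbb{1}^A\otimes S$, where $S$ is the swap on $B_1B_2$ (eigenvalue $+1$ on $\mathrm{Sym}$, $-1$ on the singlet). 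Hence $\rho^{AB_1B_2}$ is block diagonal with respect to this splitting, $\rho^{AB_1B_2}=\rho_{\mathrm s}\oplus\rho_{\mathrm a}$, with $\rho_{\mathrm s}$ supported on $\mathcal H_A\otimes\mathrm{Sym}$ and $\rho_{\mathrm a}$ on $\mathcal H_A\otimes\mathrm{span}\{\ket{\psi^-}\}$.

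Because the antisymmetric subspace of two qubits is one–dimensional, the second block must factorize: $\rho_{\mathrm a}=\tau^A\otimes\proj{\psi^-}$ for some positive operator $\tau^A$ on $\mathcal H_A$. Since $\Tr_{B_2}\proj{\psi^-}=I_B/2$, the antisymmetric block contributes $\Tr_{B_2}\rho_{\mathrm a}=\tau^A\otimes I_B/2$ to the $AB_1$ marginal. The key observation is that this maximally $B$-mixed term can be replaced by something living entirely on the symmetric subspace without changing the marginal: writing $P_{\mathrm{sym}}$ for the projector onto $\mathrm{Sym}$, one has $\Tr_{B_2}P_{\mathrm{sym}}=\tfrac32 I_B$ (for qubits), so $\tfrac13\,\tau^A\otimes P_{\mathrm{sym}}$ is supported on $\mathcal H_A\otimes\mathrm{Sym}$, has the same trace as $\rho_{\mathrm a}$, and also satisfies $\Tr_{B_2}\bigl(\tfrac13\,\tau^A\otimes P_{\mathrm{sym}}\bigr)=\tau^A\otimes I_B/2$.

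I would then set $\sigma^{AB_1B_2}:=\rho_{\mathrm s}+\tfrac13\,\tau^A\otimes P_{\mathrm{sym}}$. It is positive with unit trace, it is supported entirely on $\mathcal H_A\otimes\mathrm{Sym}(B_1B_2)$ — hence bosonic, and in particular invariant under exchanging $B_1$ and $B_2$ — and by construction $\Tr_{B_2}\sigma^{AB_1B_2}=\Tr_{B_2}\rho_{\mathrm s}+\tau^A\otimes I_B/2=\Tr_{B_2}\rho^{AB_1B_2}=\rho^{AB}$, the $AB_2$ marginal being equal to it by the swap symmetry. Thus $\sigma^{AB_1B_2}$ is the desired bosonic extension; the only routine parts are verifying positivity and the two partial–trace identities.

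There is no real obstacle inside this argument — the proof is short — but the point worth flagging is precisely where $d_B=2$ enters essentially: the factorization $\rho_{\mathrm a}=\tau^A\otimes\proj{\psi^-}$, which forces the antisymmetric block's $B_2$-marginal into the replaceable product form $\tau^A\otimes I_B/2$. For $d_B\ge3$ the antisymmetric subspace has dimension $\binom{d_B}{2}>1$, so the antisymmetric block of a symmetric extension can carry genuine $A$–$B_1B_2$ correlations whose $B_2$-marginal cannot be re-hosted on the symmetric subspace; this is exactly the mechanism behind the antisymmetric two–qutrit counterexample, and why Fact~\ref{2extionsion} does not extend to higher $d_B$.
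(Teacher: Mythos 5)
Your proof is correct and follows essentially the same route as the paper: use the swap symmetry to block-diagonalize $\rho^{AB_1B_2}$ into symmetric and antisymmetric parts, note that for $d_B=2$ the antisymmetric block factorizes as $\tau^A\otimes\proj{\psi^-}$, and replace it by an operator on the symmetric subspace with the same $B_1$ marginal. The only cosmetic difference is the replacement operator: the paper substitutes the singlet with the pure state $\ket{\psi^+}$ (whose $B_1$ marginal is also $I_B/2$), while you use the normalized projector $\tfrac13 P_{\mathrm{sym}}$; both choices work identically.
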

The original proof can be found in~\cite{myhr2009symmetric}.
For consistence and readability, we include the proof here.
\begin{proof}
For $k=2$, Eq.(\ref{symmetric}) reduces to
\begin{equation}
\left(\mathbb{1}^A\otimes P^{B_1 B_2}\right)\rho^{A B_1 B_2}\left(\mathbb{1}^A\otimes P^{B_1 B_2}\right)^{\dagger}=\rho^{A B_1 B_2},
\end{equation}
which means that $\rho^{A B_1 B_2}$ commutes with $(\mathbb{1}_A\otimes P_{B_1 B_2})$.
Therefore, they have common eigen-vectors, say $\{\ket{\phi_j}\}$.
Since $\left(\mathbb{1}^A\otimes P^{B_1 B_2}\right)^2=\mathbb{1}$,  we have
\begin{eqnarray}\label{spectra}
\mathbb{1}^A\otimes P^{B_1 B_2}\ket{\phi_j}=\pm \ket{\phi_j},\,\forall j.
\end{eqnarray}
Thus generically, $\rho^{A B_1 B_2}$ can be decomposed as
\begin{equation}
\rho^{A B_1 B_2}=\sum_j \lambda_j^+ \ket{\phi_j^+}\bra{\phi_j^+} + \sum_l \lambda_l^- \ket{\phi_l^-}\bra{\phi_l^-},
\end{equation}
where $\mathbb{1}^A\otimes P^{B_1 B_2}\ket{\phi_j^{\pm}}=\pm \ket{\phi_j^{\pm}}$.
Owning to the fact that $B_1$ and $B_2$ are two qubits, $\ket{\phi_j^{\pm}}$ can be further decomposed as
\begin{subequations}
\begin{eqnarray}
\ket{\phi_j^+}&=&\sum_{k}\ket{\psi_{j,k}}_A\ket{\psi_k^+}_{B_1 B_2},\\
\ket{\phi_j^-}&=&\ket{\xi_{j}}_A\ket{\psi^-}_{B_1 B_2},
\end{eqnarray}
\end{subequations}
where $\ket{\psi_{j,k}}_A$ and $\ket{\xi_{j}}_A$ are vectors of subsystem $A$, while $\ket{\psi_k^+}_{B_1 B_2}$ and $\ket{\psi^-}_{B_1 B_2}$ are the triplet and singlet states respectively.\footnote{Here we do not require $\ket{\psi_{j,k}}_A$ and $\ket{\xi_{j}}_A$ to be normalized for simplicity in description.}
Replacing the singlet state $\ket{\psi^-}_{B_1 B_2}\equiv\frac{1}{\sqrt 2}\left(\ket{01}-\ket{10}\right)$ with $\ket{\psi^+}_{B_1 B_2}\equiv\frac{1}{\sqrt 2}\left(\ket{01}+\ket{10}\right)$, we get a new global state $\sigma^{A B_1 B_2}$
\begin{eqnarray}\label{decomposition}
\sigma^{A B_1 B_2}&=&
\sum_{j}
\lambda_j^+ \sum_{k,k'}\ket{\psi_{j,k}}_A\bra{\psi_{j',k'}}_A \ket{\psi_k^+}_{B_1 B_2} \bra{\psi_{k'}^+}_{B_1 B_2}\notag\\
 &+& \sum_l \lambda_l^- \ket{\xi_l^-}_A\bra{\xi_l^-}_A\ket{\psi^+}_{B_1 B_2}\bra{\psi^+}_{B_1 B_2}.
\end{eqnarray}
Obviously, $\sigma^{A B_1 B_2}$ and $\rho^{A B_1 B_2}$ have identical reduced density matrix $\rho^{A B_1}$, and $\sigma^{A B_1 B_2}$ supports on the bosonic subspace.
\end{proof}

The above proof can be roughly divided into $2$ steps:
\begin{enumerate}
\item Find the general form of global state after symmetric extension, which probably is a convex combination of bosonic extension and non-bosonic extension.
\item Demonstrate that a bosonic extension, which preserve the reduced density matrix untouched, could be yielded by replacing the non-bosonic component with a bosonic one.
\end{enumerate}

However, the above $2$-symmetric extension possesses properties that are not true for general $k$.
\begin{itemize}
\item[a.] When considering $2$-symmetric extension states, the permutation group contains only $1$ non-trivial elements $P^{B_1 B_2}$, thus all permutations commute with global density matrix and have the common set of eigen-vectors. While for general $k$, the permutation group itself is a non-abelian group, thus Eq.(\ref{spectra}) will not always hold.
\item[b.] The dimension of non-bosinic subspace in $2$-qubit is $1$, hence we do not have to consider the off diagonal terms for non-bosonic component. Again, it is no longer true for general $k$.
\end{itemize}
The above differences implies that the decomposition of general $k$-symmetric extendible states after extension will be more complicated than Eq.(\ref{decomposition}).

\section{$3$-symmetric/bosonic extension}
\label{sec:3}
Before starting the proof for general $k$, we first take a look at the $k=3$ case.

Consider the Hilbert space $\ct\equiv V^{(1)}\otimes V^{(2)}\otimes V^{(3)}$ constituted by $3$-qubit $B_1, B_2$ and $B_3$, where each $V^{(i)}$ represents a qubit. $\ct$ is spanned by $8$ vectors: $\left\{\ket{000}, \ket{001},\cdots, \ket{111}\right\}$.
The bosonic subspace contains $4$ linear independent vectors
\begin{eqnarray*}
&&\qquad\qquad\ket{000},\\
&&\frac{1}{\sqrt{3}}\left(\ket{001}+\ket{010}+\ket{100}\right),\\
&&\frac{1}{\sqrt{3}}\left(\ket{011}+\ket{101}+\ket{110}\right),\\
&&\qquad\qquad\ket{111}.
\end{eqnarray*}
Since the cross term of bosonic subspace and non-bosonic subspace is forbidden, we shall only consider the density matrix that supports on nonbosonic subspace.

Notice that, permutation can only swap two or more subsystems, but keep the numbers of $\ket{0}$ and $\ket{1}$ constant. Thus the nonbosonic subspace could be further divided into $2$ subspaces $V^{(a)}$ and $V^{(b)}$.
$V^{(a)}$ has $2$ $\ket{0}$ and $1$ $\ket{1}$ while the other has $1$ $\ket{0}$ and $2$ $\ket{1}$.

First let us consider a density matrix $\tilde{\rho}^{AB_1B_2B_3}$ that purely supports on $\mathrm{End}(V_A)\otimes \mathrm{End}(V^{(a)})$.
The dimension of this subspace is $2$ and one can find a basis
\begin{subequations}
\begin{eqnarray}
\ket{\psi^{(a)}_1} &\equiv& \frac{1}{\sqrt{6}}\left(2\ket{001}-\ket{010}-\ket{100}\right)\\
\ket{\psi^{(a)}_2} &\equiv& \frac{1}{\sqrt{2}}\left(\ket{010}-\ket{100}\right)
\end{eqnarray}
\end{subequations}
It can be easily checked that, if Eq.(\ref{symmetric}) was satisfied,
\begin{eqnarray}
\tilde{\rho}^{AB_1B_2B_3}&=&\tilde{\rho}^A \otimes \tilde{\rho}^{B_1B_2B_3}\notag\\
&=&\tilde{\rho}^A \otimes\left(\frac{1}{2}\ket{\psi^{(a)}_1}\bra{\psi^{(a)}_1}+\frac{1}{2}\ket{\psi^{(a)}_2}\bra{\psi^{(a)}_2}\right)\notag\\
&\propto& \tilde{\rho}^A \otimes \mathbb{1}^{\mathrm{End}(V^{(a)})},
\end{eqnarray}
where $\mathbb{1}^{\mathrm{End}(V^{(a)})}$ is the identity operator in $\mathrm{End}(V^{(a)})$.

It is straightforward to check that $V^{(a)}$ is an invariant space under permutation group $S_3$, thus the representation on $V^{(a)}$ must be irreducible.
Eq.(\ref{symmetric}) essentially require that each group element of $S_3$ must commute with $\tilde{\rho}^{B_1B_2B_3}$.
By Schur's Lemma, it must be proportional to $\mathbb{1}^{\mathrm{End}(V^{(a)})}$.

Likewise, one can write down a density matrix that purely supports on $\mathrm{End}(V_A)\otimes \mathrm{End}(V^{(b)})$
\begin{eqnarray}
\bar{\rho}^{AB_1B_2B_3}
&=&\bar{\rho}^A \otimes\left(\frac{1}{2}\ket{\psi^{(b)}_1}\bra{\psi^{(b)}_1}+\frac{1}{2}\ket{\psi^{(b)}_2}\bra{\psi^{(b)}_2}\right)\notag\\
&=&\bar{\rho}^A \otimes \bar{\rho}^{B_1B_2B_3},
\end{eqnarray}
where
\begin{subequations}
\begin{eqnarray}
\ket{\psi^{(b)}_1} &\equiv& \frac{1}{\sqrt{6}}\left(2\ket{110}+\ket{101}-\ket{011}\right),\\
\ket{\psi^{(b)}_2} &\equiv& \frac{1}{\sqrt{2}}\left(\ket{101}-\ket{011}\right).
\end{eqnarray}
\end{subequations}

Examining the density matrix supporting purely on the bosonic subspace, one could find that there might exist cross terms like $\ket{\psi}_A\bra{\psi'}_A\otimes\ket{000}_{B_1B_2B_3}\bra{111}_{B_1B_2B_3}$.
Thus it is reasonable to assume that, cross terms mapping from $V^{(a)}$ to $V^{(b)}$ would also exist and vice versa.

After calculation, one could verify that, an Hermitian cross term satisfying Eq.(\ref{symmetric}) must be of the following form
\begin{eqnarray}
\hat{\rho}^{AB_1B_2B_3}&=&\hat{\rho}^A\otimes \hat{\rho}^{B_1B_2B_3}+h.c.,
\end{eqnarray}
where
\begin{equation}
\hat{\rho}^{B_1B_2B_3}=\frac{1}{2}\ket{\psi^{(a)}_1}\bra{\psi^{(b)}_1}+\frac{1}{2}\ket{\psi^{(a)}_2}\bra{\psi^{(b)}_2}.
\end{equation}

A general density matrix $\rho^{AB_1B_2B_3}$ supporting on $\mathrm{End}(V_A)\otimes \mathrm{End}(V^{(a)}\oplus V^{(b)})$ should be a linear combination of $\tilde{\rho}^{AB_1B_2B_3}$, $\bar{\rho}^{AB_1B_2B_3}$ and $\hat{\rho}^{AB_1B_2B_3}$\footnote{Of cause, such coefficients have to satisfy some constrains to insure that $\rho^{AB_1B_2B_3}$ is a legal density matrix.}
\begin{equation}
\rho^{AB_1B_2B_3}=\alpha \tilde{\rho}^{AB_1B_2B_3}+ \beta \bar{\rho}^{AB_1B_2B_3}+ \gamma \hat{\rho}^{AB_1B_2B_3}.
\end{equation}

Define
\begin{subequations}
\begin{eqnarray}
\tilde{\sigma}^{B_1B_2B_3} &\equiv& \ket{\phi_1}_{B_1B_2B_3}\bra{\phi_1}_{B_1B_2B_3},\\
\bar{\sigma}^{B_1B_2B_3} &\equiv& \ket{\phi_2}_{B_1B_2B_3}\bra{\phi_2}_{B_1B_2B_3},\\
\hat{\sigma}^{B_1B_2B_3} &\equiv& \ket{\phi_1}_{B_1B_2B_3}\bra{\phi_2}_{B_1B_2B_3}+h.c.,
\end{eqnarray}
\end{subequations}
where
\begin{subequations}
\begin{eqnarray}
\ket{\phi_1}_{B_1B_2B_3}&\equiv&\frac{1}{\sqrt{3}}\left(\ket{001}+\ket{010}+\ket{100}\right),\\
\ket{\phi_2}_{B_1B_2B_3}&\equiv&\frac{1}{\sqrt{3}}\left(\ket{110}+\ket{101}+\ket{011}\right).
\end{eqnarray}
\end{subequations}

It is easily to check that, a bosonic global state $\sigma^{AB_1B_2B_3}$, which satisfies
\begin{equation}
\Tr_{(AB_1)^c}\sigma^{AB_1B_2B_3}=\Tr_{(AB_1)^c}\rho^{AB_1B_2B_3}=\rho^{AB_1},
\end{equation}
can be obtained by a simple replacement
\begin{subequations}
\begin{eqnarray}
\tilde{\rho}^{B_1B_2B_3} \leftrightarrow \tilde{\sigma}^{B_1B_2B_3},\\
\bar{\rho}^{B_1B_2B_3} \leftrightarrow \bar{\sigma}^{B_1B_2B_3},\\
\hat{\rho}^{B_1B_2B_3} \leftrightarrow \hat{\sigma}^{B_1B_2B_3},
\end{eqnarray}
\end{subequations}
and a coefficient modification $\gamma' \leftrightarrow -\gamma$.\footnote{By multiplying a global phase on basis in $V^{(b)}$, we could always absorb this minus sign or any other phase.}
Clearly, as long as $\rho^{AB_1B_2B_3}$ is a density matrix, which means it has to be positive definite, normalized and Hermitian, $\sigma^{AB_1B_2B_3}$ must also be a proper density matrix.

Though the idea of proof for $k=3$ is quite similar as $k=2$, discrepancies mentioned in section \ref{sec:2} leaded to a more complicated version.

Our proof for general $k$ will also be divided into $3$ steps.

\begin{enumerate}
\item At first we write down a general matrix (not necessarily a density matrix) after symmetric extension.
This step could be further divided into $2$ steps.
Firstly, we shall write down a general matrix in $\ch\equiv\mathrm{End}(V_A)\otimes\mathrm{End}(\otimes^k_{i=1} V_{B_i})$.
The key point of this step is to find an orthogonal and complete basis in $\ch$.
Such basis should be able to conveniently describe the permutation symmetry.
Secondly, we shall restrict such general matrix form according to Eq.(\ref{symmetric}).
As one could expect, not only there exist the diagonal terms that represent mapping inside irreducible subspaces, cross terms that describe mapping between different irreducible subspaces also arise, as long as the representation on both irreducible subspace are equivalent.

\item Then, we shall verify that the former part will become the diagonal terms after partial trace, while the latter one contributes to the off-diagonal terms.
Under our specific situation that $B_1$ is a qubit, only $1$ independent off-diagonal term survives, thus cross term can always be replaced with a bosonic version by properly modifying coefficients.
On the other hand, the ratio between diagonal terms is always same, regardless of whether they are obtained from a bosonic extension or not.

\item The last piece is to demonstrate that the global matrix, obtained by replacing non-bosonic entries with bosonic ones, is positive semi-definite.
\end{enumerate}

\section{The case of $k$-extension}
\label{sec:4}
In this section we will prove the follow main result.
\begin{theorem}\label{main theorem}
For any $k$-extendible state $\rho^{AB}$, if $d_B=2$, then a $k$-bosonic extension always exists.
\end{theorem}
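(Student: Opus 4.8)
The plan is to exploit the $S_k$-representation structure of the symmetric subspace decomposition exactly as in the $k=2$ and $k=3$ warm-ups, but systematically. The Hilbert space of $B_1\cdots B_k$ (each a qubit) decomposes under the commuting actions of $S_k$ and $SU(2)$ as $(\mathbb{C}^2)^{\otimes k}=\bigoplus_{j}V_j\otimes W_j$, where $j$ runs over the allowed total-spin values, $V_j$ is the spin-$j$ irrep of $SU(2)$ (dimension $2j+1$, counted by magnetization sectors, i.e.\ Hamming weight) and $W_j$ is the Specht module for $S_k$ associated to the two-row Young diagram $(k-\lambda,\lambda)$. The bosonic subspace is precisely the $j=k/2$ block, where $W_{k/2}$ is the trivial representation. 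The first step is therefore to fix, for each $j$, an explicit orthonormal basis of $V_j\otimes W_j$ compatible with the magnetization grading; the permutation invariance condition Eq.(\ref{symmetric}) then says that a valid $k$-symmetric extension $\rho^{AB_1\cdots B_k}$, viewed as an operator on $\mathrm{End}(V_A)\otimes\mathrm{End}((\mathbb{C}^2)^{\otimes k})$, must commute with $\mathbb{1}^A\otimes(\text{$S_k$-action})$. By Schur's lemma applied blockwise, $\rho$ is forced to have the form $\sum_{j}\rho^{A,(j)}\otimes(\mathbb{1}_{W_j}\otimes \sigma^{(j)})$ plus, between any two blocks $j,j'$ carrying equivalent $S_k$-representations, a cross term $\rho^{A,(jj')}\otimes(\mathbb{1}_{W_j\cong W_{j'}}\otimes\tau^{(jj')})+\mathrm{h.c.}$ Here $\sigma^{(j)}$ and $\tau^{(jj')}$ act only on the multiplicity (spin) spaces $V_j,V_{j'}$.

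The second step is to compute the marginal $\rho^{AB_1}=\Tr_{B_2\cdots B_k}\rho$. The key observation is that $B_1$ is a single qubit, so the output lives in $\mathrm{End}(V_A)\otimes\mathrm{End}(\mathbb{C}^2)$, a space of real dimension $4d_A^2$; equivalently, after tracing, the only surviving ``$B_1$-content'' is the reduced $2\times 2$ block and at most one independent off-diagonal direction on the $B_1$ qubit. Concretely one shows: (i) each diagonal block $\mathbb{1}_{W_j}\otimes\sigma^{(j)}$ traces down to an operator on $B_1$ whose weight is distributed across the two qubit levels in a fixed ratio determined only by $j$ and $k$ — independent of $\sigma^{(j)}$ because averaging over $W_j$ together with the permutation symmetry forces $\sigma^{(j)}$ to be (proportional to) $\mathbb{1}_{V_j}$ on each magnetization level, just as the $k=3$ computation gave $\tilde\rho^{B_1B_2B_3}\propto\mathbb{1}$; and (ii) the cross term between blocks $j$ and $j'=j\pm$ something with equivalent Specht module contributes only to the single off-diagonal entry $\ketbra{0}{1}_{B_1}$ (with its h.c.), again with a block-independent proportionality constant. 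Thus $\rho^{AB_1}$ depends on the extension only through: the ``$A$-weights'' $\rho^{A,(j)}$ of the diagonal blocks and the $A$-part $\rho^{A,(jj')}$ of the cross terms — the precise spin-space operators $\sigma^{(j)},\tau^{(jj')}$ are invisible to the marginal.

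The third step is the replacement: define a candidate bosonic extension $\sigma^{AB_1\cdots B_k}$ by keeping the same $A$-weights $\rho^{A,(j)}$ but putting all of them onto the $j=k/2$ bosonic block — i.e.\ replacing each $V_j$-level projector by the corresponding magnetization projector inside the symmetric subspace — and similarly routing the cross-term $A$-parts through the bosonic block's own off-diagonal structure (absorbing any residual sign, as in footnote on $\gamma'\leftrightarrow-\gamma$, by a phase on basis vectors). Because, within the symmetric subspace of $k$ qubits, the map to the $B_1$-marginal has exactly the same two diagonal ratios and the same single off-diagonal direction, one checks $\Tr_{B_2\cdots B_k}\sigma^{AB_1\cdots B_k}=\rho^{AB_1}$. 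The last piece is to verify $\sigma\ge 0$: since $\sigma$ is obtained from $\rho$ by a map that acts as a (partial) isometry on each irreducible block and merely relabels which block carries the weight, positivity and normalization of $\rho$ transfer to $\sigma$. I would make this precise by exhibiting an explicit linear isometry $U$ from $\bigoplus_j V_j\otimes(\text{1-dim})$ (the ``physical data'' subspace of $\rho$ that the marginal sees) into the symmetric subspace, so that $\sigma=(\mathbb{1}_A\otimes U)\,\rho_{\mathrm{red}}\,(\mathbb{1}_A\otimes U)^\dagger$ for an appropriate compression $\rho_{\mathrm{red}}$ of $\rho$; positivity is then automatic.

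The main obstacle I anticipate is step two, specifically bookkeeping the cross terms: for general $k$ there are several pairs of blocks $(j,j')$ whose two-row Young diagrams $(k-\lambda,\lambda)$ and $(k-\lambda',\lambda')$ might a priori give equivalent $S_k$-modules, and one must show that after partial trace \emph{all} of these collapse onto the \emph{same} one-dimensional off-diagonal direction on $B_1$ with mutually consistent constants — otherwise the replacement in step three cannot be done with a single bosonic off-diagonal term available. Resolving this requires a careful Clebsch--Gordan / branching computation ($V_j\downarrow$ restricted to $B_1$ versus $B_2\cdots B_k$), showing that only ``adjacent'' blocks interact and that the surviving data is genuinely just one real off-diagonal parameter per pair, matching what the symmetric subspace can reproduce. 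This is exactly where the $d_B=2$ hypothesis is essential and where the argument would break for $d_B\ge 3$.
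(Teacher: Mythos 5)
Your overall plan coincides with the paper's (Schur--Weyl blocks, Schur's lemma, block-by-block replacement into the symmetric subspace), but two of your key claims are wrong, and the second one destroys your positivity argument. First, your step (ii): the off-diagonal contribution to the $B_1$ marginal is \emph{not} block-independent. For the spin-$j$ block (two-row diagram $[\lambda]$ with $j=\tfrac{1}{2}(\lambda_1-\lambda_2)$), the permutation-invariant cross term $\tfrac{1}{d^{[\lambda]}}\sum_\mu\ket{\omega^{[\lambda]}_\mu}\bra{(\omega+1)^{[\lambda]}_\mu}$ traces down to $\tfrac{1}{k}\sqrt{(j-\omega)(j+\omega+1)}\,\ket{0}\bra{1}$, which depends on $j$, whereas the bosonic counterpart $\ket{\omega^S}\bra{(\omega+1)^S}$ gives the strictly larger coefficient $\tfrac{1}{k}\sqrt{(\tfrac{k}{2}-\omega)(\tfrac{k}{2}+\omega+1)}$ whenever $j<k/2$. (Two smaller slips feed into this: the diagonal ratio is $(k-2\omega):(k+2\omega)$, i.e.\ fixed by the weight $\omega$ and $k$, not by ``$j$ and $k$''; and for qubits the blocks with different $j$ carry \emph{inequivalent} Specht modules, so the cross terms that actually matter are the $\omega\neq\omega'$ terms inside a single $j$ block, not terms between blocks with ``equivalent $S_k$-representations'', which do not exist here.)

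Because of this $j$-dependence, your final step fails: a weight-preserving isometry $U$ from $V_j$ into the symmetric subspace preserves matrix elements, hence reproduces the diagonal part of the marginal but \emph{overshoots} every $\omega\to\omega\pm1$ off-diagonal term; conversely, to preserve the $AB_1$ marginal you must damp each such term by $\beta^{[\lambda]}_{\omega,\omega'}=\sqrt{\frac{(j-\omega)(j+\omega+1)}{(\frac{k}{2}-\omega)(\frac{k}{2}+\omega+1)}}\le 1$, and then $\sigma$ is no longer of the form $(\mathbb{1}_A\otimes U)\rho_{\mathrm{red}}(\mathbb{1}_A\otimes U)^\dagger$, so positivity is not ``automatic''. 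Proving positivity of the damped bosonic operator is precisely the nontrivial last step of the paper: one completes the coefficient matrix $p^{[\lambda]}_{\omega,\omega'}$ (with the free entries for $|\omega-\omega'|\ge 2$) to a positive semidefinite matrix with unit diagonal of the form $\xi^{[\lambda]}_{\omega}\xi^{*[\lambda]}_{\omega'}+(1-|\xi^{[\lambda]}_\omega|^2)\delta_{\omega\omega'}$, which is possible because $0<\beta^{[\lambda]}_{\omega,\omega+1}\le1$ and these coefficients are suitably monotone in $\omega$; positivity then follows since this amounts to a Hadamard-type product with the original positive block. Without this ingredient (or an equivalent PSD-completion argument), your proposal does not establish the theorem, and it also obscures exactly where $d_B=2$ enters and why the argument breaks for $d_B\ge3$.
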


Consider a Hilbert space $\ct=\bigotimes_{i=1}^k V^{(i)}$ constituted by $k$-qubit $B_1, B_2,\cdots, B_k$, whose computational basis is $\left\{\Phi_{i_1,i_2,\cdots,i_k}\equiv\ket{i_1,i_2,\cdots,i_k}\right\}$, where $i_1,i_2,\cdots,i_k=0,1$.

Each subsystem $V^{(i)}$ is invariant under $SU(2)$ rotation, and transforms according to the $2$-dimensional irreducible representation $D^{(2)}$.
Therefore the Lie algebra $su(2)$, which describes the infinite small rotation of $SU(2)$, has the following matrix form on each $V^{(i)}$
\begin{equation*}
J^{(i)}_z=\frac{1}{2}\left(
\begin{tabular}{c c}
1&0 \\
0&-1
\end{tabular}
\right),
\qquad
J^{(i)}_+=\left(
\begin{tabular}{c c}
0&1 \\
0&0
\end{tabular}
\right),
\qquad
J^{(i)}_-=\left(
\begin{tabular}{c c}
0&0 \\
1&0
\end{tabular}
\right),
\end{equation*}
where we have set
\begin{equation*}
\ket{1}^{(i)}=\left(
\begin{tabular}{c}
1\\
0
\end{tabular}
\right),\qquad
\ket{0}^{(i)}=\left(
\begin{tabular}{c}
0\\
1
\end{tabular}
\right),
\end{equation*}

$\ct$ is also invariant under global $SU(2)$ rotation, whose corresponding $su(2)$ algebra is given by $\mathbf{J}_z\equiv\sum_i J^{(i)}_z, \mathbf{J}_{\pm}\equiv\sum_i J^{(i)}_{\pm}$.
$\ct$ transforms under representation $\otimes^k D^{(2)}$, which is not irreducible, but can be decomposed as direct sum of a series of irreducible representations
\begin{eqnarray}\label{multiplicity}
\bigotimes\nolimits^k D^{(2)}=\bigoplus_{j}m_j D^{(2j+1)},
\end{eqnarray}
where $m_j$ is the multiplicity of irreducible representation $D^{(2j+1)}$.
This is equivalent to say that $\ct$ can be partitioned as direct sum of orthogonal subspaces
\begin{equation}
\ct=\bigoplus_{j} m_j \ct^{(2j+1)}.
\end{equation}

In Appendix \ref{SU symmetry}, we manifest that, such $\ct^{(2j+1)}$ has particular permutation symmetry described by Young diagram $[\lambda]$.\footnote{Here $[\lambda]\equiv \{\lambda_1,\lambda_2,\cdots,\lambda_n\}$ is a partition of integer $k$, where all $\lambda_i$ are integers satisfying $\lambda_1\geq \lambda_2\geq\cdots\geq\lambda_n \geq 0, \sum_{i=1}^n\lambda_i=k$. Such partition describes an $n$-row Young diagram.}

Since irreducible representation can be labeled by partition $[\lambda]$.
We can rewrite Eq.(\ref{multiplicity}) with new notation
\begin{eqnarray}\label{multiplicity1}
\bigotimes\nolimits^{k}D^{[1]}=\bigoplus_{[\lambda]}C_k^{[\lambda]}D^{[\lambda]},
\end{eqnarray}
where $C_k^{[\lambda]}$ is the multiplicity of $SU(2)$ irreducible representation $D^{[\lambda]}$.

Two irreducible representation spaces $\ct^{[\lambda]}_{\mu}$ and $\ct^{[\lambda]}_{\nu}$ corresponding to same Young diagram but different Young tableaus are orthogonal to each other.
It is also known that there is no multiplicity in any weight subspace in a $SU(2)$ representation, as long as it is irreducible.
Thus one can safely use weight $\omega$, the eigenvalue of $\mathbf{J}_z$, to label different state inside an irreducible subspace $\ct^{[\lambda]}_{\mu}$.
Therefore, $\{\ket{[\lambda],\mu,\omega}\}$ labels a complete basis of $\ct$ one by one, where $[\lambda]$ tells inequivalent $SU(2)$ representations while $\mu$ differentiate equivalent ones.
They together determine an orthogonal irreducible subspace, and $\omega$ labels every different vectors inside.

On the other hand, $\{\ket{[\lambda],\mu,\omega}\}$ can be interpreted in another way:
$\omega$ describes the weight, $[\lambda]$ tells inequivalent $S_k$ representations, thus these two parameter differentiate orthogonal invariant subspaces, while $\mu$ labels vectors inside.\footnote{The validity of this explanation is verified in Appendix \ref{Sn symmetry}.}
From now on we shall use $\ket{\omega^{[\lambda]}_{\mu}}$ short for $\ket{\omega,[\lambda],\mu}$.

Any matrix $\rho^{A B_1 B_2 \cdots B_n} \in \mathrm{End}(V_A) \otimes \mathrm{End}(\ct)$ can be expressed as
\begin{eqnarray}\label{general_form}
\rho^{A B_1 B_2 \cdots B_k}&=&\sum_{[\lambda],[\lambda']}\sum_{\mu,\mu'}\sum_{\omega,\omega'}\sum_{\alpha,\alpha'}
\ket{\psi^{\alpha}_{\omega,[\lambda],\mu}}\bra{\psi^{\alpha'}_{\omega',[\lambda'],\mu'}}\notag\\
&&\qquad \otimes\ket{\omega^{[\lambda]}_{\mu}}\bra{\omega'^{[\lambda']}_{\mu'}},\notag\\
\end{eqnarray}
where $\ket{\psi^{\alpha}_{\omega,[\lambda],\mu}}$ are non-normalized state in $V_A$ and $\alpha$ label different states in $V_A$.

Insert Eq.(\ref{general_form}) into Eq.(\ref{symmetric}). $\forall \pi \in S_k$ we get a series of constrains for $\rho^{A B_1 B_2 \cdots B_k}$:
\begin{eqnarray}
&&\qquad\forall [\lambda],[\lambda']\omega,\omega'\textrm{ and }\mu,\mu'\notag\\
&&\sum_{\alpha,\alpha'}\ket{\psi^{\alpha}_{\omega,[\lambda],\mu}}\bra{\psi^{\alpha'}_{\omega',[\lambda'],\mu'}}
\sum_{\nu,\nu'}
\ca(\pi)^{[\lambda]}_{\mu,\nu}
\ca(\pi)^{[\lambda']*}_{\nu',\mu'}
\ket{\omega^{[\lambda]}_{\nu}}
\bra{\omega'^{[\lambda']}_{\nu'}}\notag\\
&&\qquad=
\sum_{\alpha,\alpha'}\ket{\psi^{\alpha}_{\omega,[\lambda],\mu}}\bra{\psi^{\alpha'}_{\omega',[\lambda'],\mu'}}
\ket{\omega^{[\lambda]}_{\mu}}\bra{\omega'^{[\lambda']}_{\mu'}},\qquad
\end{eqnarray}
where $\ca^{[\lambda]}$ and $\ca^{[\lambda']}$ are irreducible representations of permutation group $S_k$.

Define matrix
\begin{eqnarray}
&&M(\omega,\omega',[\lambda],[\lambda'])\nonumber\\
&\equiv&
\sum_{\mu,\mu'}M(\omega,\omega',[\lambda],[\lambda'])_{\mu\mu'}
\ket{\omega^{[\lambda]}_{\mu}}\bra{\omega'^{[\lambda']}_{\mu'}},
\end{eqnarray}
where
\begin{equation}
M(\omega,\omega',[\lambda],[\lambda'])_{\mu\mu'} \equiv\sum_{\alpha,\alpha'} \ket{\psi^{\alpha}_{\omega,[\lambda],\mu}}\bra{\psi^{\alpha'}_{\omega',[\lambda'],\mu'}},
\end{equation}
thus $\forall \pi \in S_k$
\begin{eqnarray}
\label{commute}
&&\ca^{[\lambda]}(\pi)M(\omega,\omega',[\lambda],[\lambda'])\ca^{[\lambda']}(\pi)^{\dagger}\nonumber\\
&=&M(\omega,\omega',[\lambda],[\lambda']).\notag\\
\end{eqnarray}

Schur's lemma guarantee that,
\begin{itemize}
\item[a.] when $[\lambda] \neq [\lambda']$, $M=0$;
\item[b.] when $[\lambda] = [\lambda']$, $M$ is invertible.
\end{itemize}

Choose $\ket{\omega_{\mu}^{[\lambda]}}$ carefully such that the representation $\ca^{[\lambda]}$ are identical, not just isomorphic, for different weight $\omega$.
Hence all $M(\omega,\omega',[\lambda],[\lambda])$ can be proportional to the corresponding identity matrix.
Therefore, one could eliminate plenty of cross terms and restrict $\rho^{A B_1 B_2 \cdots B_k}$ to
\begin{eqnarray}\label{specific form}
\rho^{A B_1 B_2 \cdots B_k}&=&\sum_{[\lambda]}\sum_{\omega,\omega'}
\left(\sum_{\alpha,\alpha'}\ket{\psi^{\alpha}_{\omega,[\lambda]}}\bra{\psi^{\alpha'}_{\omega',[\lambda]}}\right)\notag\\
&&\otimes\frac{1}{d^{[\lambda]}}\sum_{\mu}\ket{\omega^{[\lambda]}_{\mu}}\bra{\omega'^{[\lambda]}_{\mu}},
\end{eqnarray}
where $d^{[\lambda]}$ is the dimension of $S_k$ irreducible representation corresponding to Young diagram $[\lambda]$.

Determining RDM $\rho^{AB_1}$ can be immediately reduced to calculating every possible combination of $[\lambda], \omega$ and $\omega'$.
For given $[\lambda],\omega$ and $\omega'$, one could temporally ignore system $A$ and concentrate on group $\{B_1,B_2,\cdots,B_k\}$.
Then the task left is to calculate
\begin{eqnarray}\label{partial}
\frac{1}{d^{[\lambda]}}\Tr_{(B_1)^c}\sum_{\mu}\ket{\omega^{[\lambda]}_{\mu}}\bra{\omega'^{[\lambda]}_{\mu}}.
\end{eqnarray}

If $\omega=\omega'$, it is equivalent to consider a mixed state within a constant weight subspace.
Hence
\begin{eqnarray}
\frac{1}{d^{[\lambda]}}\Tr_{(B_1)^c}\sum_{\mu}\ket{\omega^{[\lambda]}_{\mu}}\bra{\omega^{[\lambda]}_{\mu}}=t_0\ket{0}\bra{0}+t_1\ket{1}\bra{1}.
\end{eqnarray}
According to \cite{Chen2017Local}\footnote{see details in Appendix \ref{diag}},
\begin{equation}
\frac{t_0}{t_1}=\frac{k-2\omega}{k+2\omega}.
\end{equation}

Since the ratio between diagonal terms is solely determined by the number of subsystems $k$ and weight $\omega$.
Any non-bosonic extensions can be directly replaced by a bosonic version in same weight subspace
\begin{eqnarray}
\frac{1}{d^{[\lambda]}}\sum_{\mu}\ket{\omega^{[\lambda]}_{\mu}}\bra{\omega^{[\lambda]}_{\mu}}
\leftrightarrow
\ket{\omega^S}\bra{\omega^S}
\end{eqnarray}

If $\omega-\omega'=\pm 1$, nonzero contribution of Eq.(\ref{partial}) would be proportional to $\ket{1}\bra{0}$ and $\ket{0}\bra{1}$ respectively. Different $[\lambda]$ only affect the proportion coefficients.
Choose proper coefficients for bosonic extension will exactly recover the result of non-bosonic ones,\footnote{See details in Appendix \ref{off diag 1}.}
\begin{eqnarray}
\frac{1}{d^{[\lambda]}}\sum_{\mu}\ket{\omega^{[\lambda]}_{\mu}}\bra{\omega'^{[\lambda]}_{\mu}}
\leftrightarrow
\beta^{[\lambda]}_{\omega,\omega'}\ket{\omega^{S}}\bra{\omega'^{S}}
\end{eqnarray}
where
\[\beta^{[\lambda]}_{\omega,\omega'}=\sqrt{
\frac{(\frac{\lambda_1-\lambda_2}{2}-\omega)(\frac{\lambda_1-\lambda_2}{2}+\omega+1)}
{(\frac{k}{2}-\omega)(\frac{k}{2}+\omega+1)}}\delta_{\omega \pm 1,\omega'}.
\]
Obviously, $0 < \beta^{[\lambda]}_{\omega,\omega'} \leq 1$.

If $|\omega-\omega'|\geq 2$, Eq.(\ref{partial}) would vanish.

According to Eq.(\ref{specific form}), $\rho^{A B_1 B_2 \cdots B_k}$ has a series of bosonic version $\sigma^{A B_1 B_2 \cdots B_k}$ (not all of them are proper density matrix)
\begin{eqnarray}\label{positive}
\sigma^{A B_1 B_2 \cdots B_k} &\equiv & \sum_{[\lambda]}\sum_{\omega,\omega'}
\left(\sum_{\alpha,\alpha'}\ket{\psi^{\alpha}_{\omega,[\lambda]}}\bra{\psi^{\alpha'}_{\omega',[\lambda]}}\right)\notag\\
&&\otimes p^{[\lambda]}_{\omega,\omega'} \ket{\omega^{S}}\bra{\omega'^{S}},
\end{eqnarray}
where $p^{[\lambda]}_{\omega,\omega'}$ are coefficients
\begin{eqnarray}\label{p-matrix}
p^{[\lambda]}_{\omega,\omega'}=
\left\{
\begin{tabular}{l c}
1, & $\omega=\omega'$\\
$\sqrt{\frac{(\frac{\lambda_1-\lambda_2}{2}-\omega)(\frac{\lambda_1-\lambda_2}{2}+\omega+1)}
{(\frac{k}{2}-\omega)(\frac{k}{2}+\omega+1)}}$, & $\omega\pm 1=\omega'$\\
arbitrary value, & others
\end{tabular}
\right.
\end{eqnarray}

We can always find proper $p^{[\lambda]}_{\omega,\omega'}$ such that $\sigma^{A B_1 B_2 \cdots B_k}$ can be decomposed as a convex combination of a series of pure bosonic extensions.\footnote{See Appendix \ref{off diag 2} for details.}
Hence $\sigma^{A B_1 B_2 \cdots B_k}$ is positive definite.

Therefore we have finished the proof of Theorem \ref{main theorem}.

\section{discussion}
\label{sec:5}

We have shown that if a bipartite state $\rho^{AB}$ has a $k$-symmetric extension
\begin{eqnarray}
\left(\mathbb{1}^A\otimes \pi\right) \rho^{AB_1B_2\cdots B_k} \left(\mathbb{1}^A\otimes \pi\right)^{\dagger}=\rho^{AB_1B_2\cdots B_k}
\end{eqnarray}
with $\rho^{AB_i}=\rho^{AB}$,
it must also have a bosonic extension $\sigma^{AB_1B_2\cdots B_k}$ satisfying
\begin{subequations}
\begin{eqnarray}
&&\sigma^{AB_1B_2\cdots B_k}=\sum_{\alpha}p_{\alpha}\ket{\phi_{\alpha}}^{AB_1B_2\cdots B_k}\bra{\phi_{\alpha}}^{AB_1B_2\cdots B_k},\\
&&\left(\mathbb{1}^A\otimes \pi\right) \ket{\phi_{\alpha}}^{AB_1B_2\cdots B_k} = \ket{\phi_{\alpha}}^{AB_1B_2\cdots B_k},
\end{eqnarray}
\end{subequations}
where $\pi \in S_k$ is an arbitrary permutation operator, and $p_{\alpha}$ is a probability distribution as long as $B_1$ is a qubit.

Notice that, Eq.(\ref{specific form}) was essentially saying that $\rho^{AB_1B_2\cdots B_k}$ could be further decomposed into $2$ major parts. The first part contributed to the ``diagonal terms'', whose ratio is identical as long as global state lied in same weight subspaces.
Hence this part did not contain information about permutation symmetry.
The second type contributed to the ``off-diagonal terms''.
They probably carried information about permutation symmetry.

Since we have discussed the case for $k$-qubit extension, it is natural to consider the $k$-qudit problem, i.e. $d_B>2$.
However, it turns out that the situation is much more complicated, since there are now more than $1$ pair of off-diagonal terms after partial trace.
In the qubit case, we can peel off all the off-diagonal terms at one time for each given $[\lambda]$.
Due to the properties of $\beta^{[\lambda]}_{\omega,\omega'}$, the residual ``diagonal'' matrix is guaranteed to be positive definite.
In the qudit case, we might have to peel off the off-diagonal terms in several steps.
After peeling off all off-diagonal terms, the residual might not be always positive definite.
\begin{example}
Consider a tripartite pure state of on $V_A\otimes V_{B_1}\otimes V_{B_2}$ the form
\begin{eqnarray}
\ket{\psi}&=&\alpha(\ket{012}-\ket{021})\nonumber\\
&+&\beta(\ket{120}-\ket{102})+\gamma(\ket{201}-\ket{210}),
\end{eqnarray}
where $\alpha, \beta, \gamma \neq 0$ are all different.
Clearly, $\ket{\psi}$ is a fermionic extension of $\rho^{AB_1}\equiv \Tr_{B_2}[\ket{\psi}\bra{\psi}]$.
The diagonal terms of $\bar{\rho}^{AB_1}$ are
\begin{eqnarray}
\notag\bar{\rho}^{AB_1}&=&\alpha\alpha^*(\ket{01}\bra{01}+\ket{02}\bra{02})\\
\notag&&+\beta\beta^*(\ket{12}\bra{12}+\ket{10}\bra{10})\\
&&+\gamma\gamma^*(\ket{20}\bra{20}+\ket{21}\bra{21}),
\end{eqnarray}
while the off-diagonal $\tilde{\rho}^{AB_1}$ read
\begin{eqnarray}
\notag&&\tilde{\rho}^{AB_1}=-\alpha\beta^*\ket{01}\bra{10}-\alpha\gamma^*\ket{02}\bra{20}\\
&-&\beta\gamma^*\ket{12}\bra{21}+h.c.
\end{eqnarray}
If $\rho^{AB_1}$ had a bosonic extension $\sigma^{AB_1B_2}$, then $\Tr_{B_2}\sigma^{AB_1B_2}$ should produce exact off-diagonal terms as $\tilde{\rho}^{AB_1}$.

$-\alpha\beta^*\ket{01}\bra{10}$ can be obtained in $3$ different ways\footnote{of course it can be a mixture of these $3$}
\begin{subequations}
\begin{eqnarray}
&& \Tr_{B_2}[\alpha\beta^* \ket{0}\bra{1}\otimes (\ket{10}+\ket{01})\bra{00}],\\
&& \Tr_{B_2}[\alpha\beta^* \ket{0}\bra{1}\otimes \ket{11}(\bra{10}+\bra{01})],\\
&& \Tr_{B_2}[\alpha\beta^* \ket{0}\bra{1}\otimes (\ket{12}+\ket{21})(\bra{02}+\bra{20})]\label{alphabeta}.
\end{eqnarray}
\end{subequations}
However, in order to keep $\sigma^{AB_1B_2}$ positive definite, the first two choices will have to introduce $\ket{00}\bra{00}$ and $\ket{11}\bra{11}$ respectively, which did not appear in diagonal terms, and hence destroy the positive definiteness.
Therefore, we have to use Eq.(\ref{alphabeta}) to obtain $-\alpha\beta^*\ket{01}\bra{10}$ in $\Tr_{B_2}\sigma^{AB_1B_2}$.

After replacing all terms in $\tilde{\rho}^{AB_1}$ with corresponding bosonic extension, the off-diagonal terms in $\sigma^{AB_1B_2}$ are
\begin{eqnarray}
\notag\tilde{\sigma}^{AB_1B_2}&=& -\bigg\{\alpha\beta^*\ket{0}\bra{1}\otimes(\ket{12}+\ket{21})(\bra{02}+\bra{20})\\
\notag&&+\alpha\gamma^*\ket{0}\bra{2}\otimes (\ket{12}+\ket{21})(\bra{01}+\bra{10})\\
&&\notag+\beta\gamma^*\ket{1}\bra{2}\otimes (\ket{20}+\ket{02})(\bra{01}+\bra{10})\\
&& +h.c.\bigg\}.
\end{eqnarray}

Because of the global minus sign, it is impossible to peel off all the $3$ pairs off-diagonal terms or any two of them at one time.
In other words, we have to peel off them pair by pair.
After matching the corresponding diagonal terms with off-diagonal pair, the ``residual'' diagonal matrix in $\Tr_{B_3}$ will be
\begin{eqnarray}
\notag &&(\alpha\alpha^*-pp^*-ss^*)(\ket{01}\bra{01}+\ket{02}\bra{02})\\
\notag &&+(\beta\beta^*-qq^*-uu^*)(\ket{12}\bra{12}+\ket{10}\bra{10})\\
&&+(\gamma\gamma^*-tt^*-vv^*)(\ket{20}\bra{20}+\ket{21}\bra{21}),
\end{eqnarray}
where
\begin{eqnarray}
pq^*=\alpha\beta^*,\quad st^*=\alpha\gamma^*,\quad uv^*=\beta\gamma^*.
\end{eqnarray}
It is easy to check that
\begin{eqnarray}
\notag&&\alpha\alpha^*-pp^*-ss^*+\beta\beta^*-qq^*\\
&&-uu^*+\gamma\gamma^*-tt^*-vv^* < 0.
\end{eqnarray}
Therefore, there does not exist bosonic extension for $\rho^{AB_1}$.
\end{example}

However, there may be some coincident situation, under which the ``residual'' diagonal matrix is positive definite, hence the symmetric extendibility of $\rho^{AB_1}$ is equivalent to its bosonic extendibility.
We do not know whether such coincidences are purely accidental or there are some underlining profound reasons. We leave this for future research.

\section*{Acknowledgments}
We thank Jianxin Chen, Zhengfeng Ji and Nengkun Yu for helpful discussions. YL acknowledges support from the Chinese Ministry of Education under grant No. 20173080024.
BZ is supported by Natural Science and Engineering Research Council of Canada (NSERC) and Canadian Institute for Advanced Research (CIFAR).

\numberwithin{equation}{section}
\appendix

\section{permutation symmetry of irreducible subspace $\ct^{(2j+1)}$}\label{SU symmetry}

We define $P^{mn}$, an element in permutation group $S_k$ as
\begin{eqnarray}
P^{mn}\Phi_{i_1,\cdots,i_m,\cdots,i_n,\cdots,i_k}=\Phi_{i_1,\cdots,i_n,\cdots,i_m,\cdots,i_k}
\end{eqnarray}

It is obvious that the permutation of the indices of subsystems commutes with the tensor product of individual $SU(2)$ rotation, and hence the global $SU(2)$ rotation.
Therefore, any subspace
\begin{equation}
\ct^{[\lambda]}_{\mu}\equiv \cy^{[\lambda]}_{\mu}\ct
\end{equation}
projected by a standard Young tableau\footnote{Standard Young tableau is obtained by filling $\{1, 2,\cdots, k\}$ into all entries in such a manner that each row and each column keeps in increasing order.} $\cy^{[\lambda]}_{\mu}$ is also invariant under the global $SU(2)$ rotation.
Here $[\lambda]$ describes an $n$-row Young diagram, and $\mu$ differentiates standard Young tableaus corresponding to same $[\lambda]$.

Since $\{\Phi_{i_1,i_2,\cdots,i_k}\}$ is a complete basis of $\ct$, $\{\cy^{[\lambda]}_{\mu}\Phi_{i_1,i_2,\cdots,i_k}\}$ must be a super-complete basis of $\ct^{[\lambda]}_{\mu}$.
From now on, to describe the vector $\cy^{[\lambda]}_{\mu}\Phi_{i_1,i_2,\cdots,i_k}$, we shall use a graphic way: insert $i_j$ into the entry of Young diagram $\cy^{[\lambda]}$ that is filled by $j$ in standard Young tableau $\cy^{[\lambda]}_{\mu}$
\begin{example}
\Yvcentermath1
\Ystdtext1
\newcommand{\ia}{$i_1$}
\newcommand{\ib}{$i_2$}
\newcommand{\ic}{$i_3$}
\begin{eqnarray}
\young(\ia \ic ,\ib) &\equiv& \young(13,2)\Theta_{i_1,i_2,i_3}\\
& = & (E+ (1 3))(E-(1 2))\Phi_{i_1,i_2,i_3}\notag\\
& = & \Phi_{i_1,i_2,i_3} + \Phi_{i_3,i_2,i_1} -\Phi_{i_2,i_1,i_3} -\Phi_{i_3,i_1,i_2}.\notag
\end{eqnarray}
\end{example}

Because each $V^{(i)}$ is a $2$-dimensional subsystem, and for any given Young tableau, its corresponding Young operator anti-symmetrize each column, Young diagram with $3$ or more rows would project into the empty space, thus from now on we only consider $1$ or $2$-row Young diagram.

Moreover, if $2$ states belong to a same invariant subspace, say $\ct^{[\lambda]}_{\mu}$, are identical except interchanging entries in a given column, their difference would only be a factor $-1$.
\begin{example}
\begin{equation*}
\Yvcentermath1
\young(11,0) = -1 \times \young(01,1).
\end{equation*}
\end{example}
Therefore, if a state fills two $0$s or two $1$s in the same column, such state must vanish.

It can be easily verified that, the global $SU(2)$ representation on $\ct^{[\lambda]}_{\mu}$ is irreducible, by the following observation:
\begin{observation}
There is a single highest weight state $\ket{\omega_M}$ in each $\ct^{[\lambda]}_{\mu}$
\begin{equation}
\Yvcentermath1
\Ystdtext1
\newcommand{\lds}{$\ldots$}
\ket{\omega_M}=\young(111\lds \lds 1,00\lds 0).
\end{equation}
\end{observation}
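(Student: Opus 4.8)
The plan is to prove the two halves of the statement separately: that the displayed vector $|\omega_M\rangle$ really is a highest weight vector for the global $su(2)$ action on $\ct^{[\lambda]}_{\mu}$, and that no linearly independent highest weight vector exists there. Throughout I will use the structural facts already set up in this appendix: that $\ct^{[\lambda]}_{\mu}=\cy^{[\lambda]}_{\mu}\ct$ is spanned by the filled diagrams $\cy^{[\lambda]}_{\mu}\Phi_{i_1,\ldots,i_k}$; that a filled two-row diagram vanishes once some length-two column carries two equal symbols; that interchanging the two symbols of a length-two column multiplies the vector by $-1$; and that $[\lambda]$ has at most two rows. I also use that every permutation in $S_k$ commutes with the tensor product of local $SU(2)$ rotations, hence with $\mathbf J_{\pm}$, so that $\mathbf J_{\pm}$ sends $\ct^{[\lambda]}_{\mu}$ to itself and commutes with $\cy^{[\lambda]}_{\mu}$, together with the local rules $J^{(i)}_+\ket 1=0$, $J^{(i)}_+\ket 0=\ket 1$.

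\emph{Step 1: the displayed vector is highest weight.} Write $|\omega_M\rangle=\cy^{[\lambda]}_{\mu}\Phi$, where $\Phi$ is the computational basis vector carrying a $1$ in every first-row box and a $0$ in every second-row box. Since $\mathbf J_+$ commutes with $\cy^{[\lambda]}_{\mu}$, $\mathbf J_+|\omega_M\rangle=\cy^{[\lambda]}_{\mu}(\mathbf J_+\Phi)$, and $\mathbf J_+\Phi$ is the sum, over the second-row positions, of the basis vectors obtained by flipping that single $0$ to a $1$. Read off in the diagram, each such summand has a length-two column containing two $1$'s, so $\cy^{[\lambda]}_{\mu}$ annihilates it and $\mathbf J_+|\omega_M\rangle=0$. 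That $|\omega_M\rangle\neq 0$ I would check by the usual inner-product computation: $\langle\Phi|\cy^{[\lambda]}_{\mu}|\Phi\rangle$ receives a nonzero contribution only from permutations $pq$ (row part $p$, column part $q$) whose column part preserves the row-multisets of $\Phi$, and for a two-row shape every such $q$ is a product of an even number of transpositions, so the surviving terms all add with the same sign and the total is a positive multiple of $\lambda_1!\,\lambda_2!$. The weight of $|\omega_M\rangle$ is $j:=\tfrac12(\lambda_1-\lambda_2)$.

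\emph{Step 2: uniqueness.} Decomposing any $\mathbf J_+$-null $v\in\ct^{[\lambda]}_{\mu}$ into weight components and using that $\mathbf J_+$ raises weight by one, each component is separately $\mathbf J_+$-null, so it suffices to control the weight spaces $(\ct^{[\lambda]}_{\mu})_w$. A nonzero filled diagram of weight $w$, i.e.\ with $a=w+\tfrac k2$ symbols equal to $1$, must place a $0$ in each of the $\lambda_2$ length-two columns, so $\lambda_2\le a\le\lambda_1$, giving $-j\le w\le j$. For two nonzero diagrams of the same weight I would argue, using the row symmetrizer, that permuting the singleton (length-one) entries among themselves does not change the vector, and then use the column-swap sign rule to normalize all length-two columns, so that every nonzero weight-$w$ diagram is one fixed vector up to sign. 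Hence $\dim(\ct^{[\lambda]}_{\mu})_w=1$ for $|w|\le j$ and $0$ otherwise, and $\dim\ct^{[\lambda]}_{\mu}=\lambda_1-\lambda_2+1=2j+1$. Since $\ct^{[\lambda]}_{\mu}$ already contains the $(2j+1)$-dimensional irreducible subrepresentation generated from $|\omega_M\rangle$ by repeated application of $\mathbf J_-$, it must coincide with it; an irreducible $su(2)$-representation has a one-dimensional highest weight space, necessarily the weight-$j$ line spanned by $|\omega_M\rangle$.

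\emph{Main obstacle.} The verification that $|\omega_M\rangle$ is highest weight is immediate from the column rule; the delicate part is the weight-space count, and specifically the claim that every nonzero weight-$w$ filled diagram collapses to a single line. This is where the two-row restriction on $[\lambda]$ does real work and where the bookkeeping of signs under column swaps, together with the row-symmetrization identities among the singleton boxes, must be done with care; I would isolate this reduction as a lemma — essentially the identification of $\ct^{[\lambda]}_{\mu}$ with the rank-$2$ Weyl module of shape $[\lambda]$ — and invoke it here.
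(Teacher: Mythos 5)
Your proof is correct, and its first half is exactly the paper's argument: $\mathbf{J}_+\ket{\omega_M}=\cy^{[\lambda]}_{\mu}(\mathbf{J}_+\Phi)$, and every flipped term either is annihilated outright or acquires two equal symbols in a length-two column. Where you genuinely diverge is in how uniqueness is established. The paper proceeds by direct case analysis on the weight: diagrams of weight above $j=\tfrac12(\lambda_1-\lambda_2)$ vanish, diagrams of weight exactly $j$ reduce to $\pm\ket{\omega_M}$ by column swaps, and lower-weight states are asserted to ``survive the action of $\mathbf{J}_+$'' and hence not be highest weight --- that last step is stated only for individual filled diagrams and is left unjustified for general linear combinations. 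You instead bound every weight multiplicity by one (via the singleton-permutation identity and the column-swap normalization), locate the support in $[-j,j]$, and conclude that $\ct^{[\lambda]}_{\mu}$ coincides with the $(2j+1)$-dimensional irreducible module generated from $\ket{\omega_M}$, from which uniqueness of the highest weight line is automatic. This route buys two things the paper omits: an actual proof that $\ket{\omega_M}\neq 0$ (your computation $\langle\Phi|\cy^{[\lambda]}_{\mu}|\Phi\rangle=\lambda_1!\,\lambda_2!>0$, where only the identity column permutation contributes) and a uniqueness argument valid for arbitrary vectors in the lower weight spaces, not just basis diagrams. The combinatorial inputs --- the two-row restriction, the column vanishing rule, and the column-swap sign rule --- are identical in both treatments, and your isolated lemma on collapsing each weight space to a single line is precisely the point where the paper is most terse.
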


\begin{proof}
It can be easily verified that the effect of $\mathbf{J}_+$ acting on any state in $\ct$ is just simply lifting $\ket{0}$ to $\ket{1}$ or annihilating $\ket{1}$, then summing all modified states together.
Therefore $\mathbf{J}_+\ket{\omega_M}$ would vanish since each modified state will either be annihilated directly or have two $\ket{1}$s filling in a column.
Thus $\ket{\omega_M}$ is a highest weight state.

States with same amount of $\ket{1}$ and $\ket{0}$ as $\ket{\omega_M}$ would be either $0$ or can be obtained by simply interchanging entries within one or more columns of $\ket{\omega_M}$, which would at most contribute an additional factor $-1$.

States with more $\ket{1}$ than $\ket{\omega_M}$ vanish directly.

States that replace several $\ket{1}$ to $\ket{0}$ in the first row will survive after the action of $\mathbf{J}_+$, thus can not be a highest weight state.

Therefore $\ket{\omega_M}$ indeed is the single highest weight state in $\ct^{[\lambda]}_{\mu}$.
\end{proof}

\section{Construct irreducible subspace of $S_k$}\label{Sn symmetry}
For any element $\pi \in S_k$, $\pi \ct^{[\lambda]}_{\mu}$ will either leave $\ct^{[\lambda]}_{\mu}$ untouched or map it integrally to another $\ct^{[\lambda]}_{\nu}$
\begin{equation}
\pi \ct^{[\lambda]}_{\mu} = \pi \cy^{[\lambda]}_{\mu}\ct=\cy^{[\lambda]}_{\nu} \pi\ct=\ct^{[\lambda]}_{\nu}.
\end{equation}

Furthermore, any element $\pi \in S_k$ can not change the state weight.
Thus states with same weight $\omega_0$, irreducible label $[\tilde{\lambda}]$, but different $\mu$ span an invariant subspace $\ct^{[\tilde{\lambda}]}(\omega_0)$ under permutation.
The representation, say $\ca$, on $\ct^{[\tilde{\lambda}]}(\omega_0)$ can be decomposed as a direct sum of irreducible representations of $S_k$, and so does $\ct^{[\tilde{\lambda}]}(\omega_0)$ itself.

Any state in $\ct^{[\tilde{\lambda}]}(\omega_0)$ is obtained by a projection operator corresponding to Young diagram $[\tilde{\lambda}]$.
Hence in $\ct^{[\tilde{\lambda}]}(\omega_0)$ only appears irreducible representation described by $\ca^{[\tilde{\lambda}]}$.
Furthermore, it can be quickly obtained that the multiplicity of $\ca^{[\tilde{\lambda}]}$ must be $1$, as a quickly corollary of the following statement:
\begin{observation}
Suppose $C_k^{[\lambda]}$ is the multiplicity of $SU(2)$ irreducible representation $D^{[\lambda]}$ appearing in Eq.(\ref{multiplicity1}), then $C_k^{[\lambda]}$ equals to $d^{[\lambda]}$, the dimension of irreducible representation $\ca^{[\lambda]}$ of \textbf{permutation group} $S_k$.
\end{observation}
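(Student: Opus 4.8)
The plan is to read this Observation as Schur--Weyl duality for the qubit, namely $(\mathbb{C}^2)^{\otimes k}\cong\bigoplus_{[\lambda]}D^{[\lambda]}\otimes\ca^{[\lambda]}$ as an $SU(2)\times S_k$ module, and to prove it in a self-contained way by computing $C_k^{[\lambda]}$ and $d^{[\lambda]}$ independently and checking that they coincide. Recall from the earlier discussion that, since $d_B=2$, only one- and two-row diagrams $[\lambda]=\{\lambda_1,\lambda_2\}$ with $\lambda_1+\lambda_2=k$ can occur in Eq.~(\ref{multiplicity1}); for every other $[\lambda]$ the multiplicity $C_k^{[\lambda]}$ vanishes and there is nothing to prove.

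The first step is to compute the left-hand side by weight counting. The weight (the $\mathbf{J}_z$-eigenvalue) of a computational basis vector of $\ct$ depends only on the number of qubits it carries in state $\ket{1}$, so the weight-$\omega$ subspace $\ct(\omega)$ has dimension $\binom{k}{k/2+\omega}$. Using the explicit action of $\mathbf{J}_+$ recalled in Appendix~\ref{SU symmetry} (it lifts each $\ket{0}$ to $\ket{1}$, annihilates each $\ket{1}$, and sums over positions), together with the standard $su(2)$ fact that $\mathbf{J}_+\colon\ct(\omega)\to\ct(\omega+1)$ is surjective for all $\omega\ge0$, the space of weight-$\omega$ highest-weight vectors has dimension $\binom{k}{k/2+\omega}-\binom{k}{k/2+\omega+1}$. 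Since each irreducible copy of $D^{(2j+1)}$ contributes exactly one highest-weight vector, sitting at weight $j$, this difference \emph{is} the multiplicity: with $j=(\lambda_1-\lambda_2)/2$ and hence $\lambda_2=k/2-j$, we get $C_k^{[\lambda]}=\binom{k}{\lambda_2}-\binom{k}{\lambda_2-1}$. For the right-hand side, $d^{[\lambda]}$ is the number of standard Young tableaux of the two-row shape $\{\lambda_1,\lambda_2\}$; the hook-length formula (equivalently, the ballot count of monotone lattice paths staying above the diagonal) gives $d^{[\lambda]}=k!\,(\lambda_1-\lambda_2+1)/[(\lambda_1+1)!\,\lambda_2!]$, which is an elementary rearrangement of the very same binomial difference. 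Matching the two expressions proves the Observation, and as pointed out in Appendix~\ref{Sn symmetry} it follows that $\ca^{[\tilde\lambda]}$ appears with multiplicity one inside each $\ct^{[\tilde\lambda]}(\omega_0)$.

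An alternative route, closer to the Young-symmetrizer language already in play, would exhibit the $d^{[\lambda]}$ copies directly: each standard tableau $\mu$ of shape $[\lambda]$ produces, via $\cy^{[\lambda]}_\mu$, an irreducible $SU(2)$-submodule $\ct^{[\lambda]}_\mu\cong D^{[\lambda]}$ by Appendix~\ref{SU symmetry}, so $C_k^{[\lambda]}\ge d^{[\lambda]}$ \emph{provided} these submodules are linearly independent, whereupon the global dimension identity $\sum_{[\lambda]}d^{[\lambda]}\dim D^{[\lambda]}=2^k=\dim\ct$ forces equality. I expect the main obstacle to be exactly that proviso: proving that the submodules attached to distinct standard tableaux of the same shape are linearly independent, despite $\ct$ having dimension only $2^k$, far below $|S_k|=k!$. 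The clean way to settle it is to test on highest-weight vectors --- apply the various $\cy^{[\lambda]}_\mu$ to the single configuration with $\lambda_1$ ones and $\lambda_2$ zeros and verify the outputs are independent --- but in practice the double count of the previous paragraph sidesteps this issue altogether, so that is the route I would actually carry out.
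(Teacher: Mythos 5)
Your argument is correct, but it proceeds differently from the paper's. The paper proves the Observation by induction on $k$: it tensors the $k=m$ decomposition with one more copy of $D^{[1]}$, invokes the Littlewood--Richardson (Pieri) rule to obtain the Pascal-type recursion $C_{m+1}^{\{\lambda'_1,\lambda'_2\}}=C_m^{\{\lambda'_1-1,\lambda'_2\}}+C_m^{\{\lambda'_1,\lambda'_2-1\}}$, and then checks by direct calculation that the hook-length formula $d^{\{\lambda_1,\lambda_2\}}=(\lambda_1+\lambda_2)!\,(\lambda_1-\lambda_2+1)/[\lambda_2!\,(\lambda_1+1)!]$ satisfies the same recursion with the same initial condition. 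You instead compute $C_k^{[\lambda]}$ in closed form by weight counting, $C_k^{[\lambda]}=\dim\ct(j)-\dim\ct(j+1)=\binom{k}{\lambda_2}-\binom{k}{\lambda_2-1}$ with $j=(\lambda_1-\lambda_2)/2$, and observe that this ballot number is an algebraic rearrangement of the same hook-length expression; this is essentially qubit Schur--Weyl duality established by a double count rather than by induction. Both routes take the hook-length formula for $d^{[\lambda]}$ as a known input, so neither is more self-contained on the $S_k$ side; your version avoids the branching-rule machinery and yields the multiplicity in one step, while the paper's induction makes the ``add one box'' correspondence between the $SU(2)$ Clebsch--Gordan series and the $S_{m}\uparrow S_{m+1}$ branching of standard tableaux explicit, which is closer in spirit to the Young-tableau bookkeeping used elsewhere in the appendices. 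One small remark: you do not actually need surjectivity of $\mathbf{J}_+\colon\ct(\omega)\to\ct(\omega+1)$ as a separate input --- the identity $m_\omega=\dim\ct(\omega)-\dim\ct(\omega+1)$ already follows from the fact that every irreducible summand with highest weight $j'\ge\omega$ contributes exactly one dimension to $\ct(\omega)$. You are also right to distrust the ``alternative route'' via linear independence of the Young-symmetrizer images $\ct^{[\lambda]}_\mu$; the dimension count you carry out instead is the cleaner way to close that gap.
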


\begin{proof}
When $k=1$, the statement is trivial.

Suppose statement holds when $k=m$.

For $k=m+1$,
\begin{eqnarray}
\bigotimes\nolimits^{m+1}D^{[1]}&=&\left(\bigoplus_{\lambda_1+\lambda_2=m}C^{[\lambda]}_m D^{[\lambda]}\right)\bigotimes D^{[1]}\notag\\
&=&\bigoplus_{\lambda'_1+\lambda'_2=m+1}C^{[\lambda']}_{m+1} D^{[\lambda']},
\end{eqnarray}
according to Littlewood-Richardson rule,
\begin{eqnarray}
C_{m+1}^{\{\lambda'_1,\lambda'_2\}}=C_m^{\{\lambda'_1-1,\lambda_2\}}+C_m^{\{\lambda'_1,\lambda'_2-1\}},
\end{eqnarray}
where $C_m^{\{\lambda_1,\lambda_2\}}$ equals to $d^{[\lambda]}$, the dimension of $S_m$ irreducible representation $\ca^{\{\lambda_1,\lambda_2\}}$.
$d^{[\lambda]}$ can be easily computed from its Young diagram by a result known as the hook-length formula
\begin{eqnarray}\label{dimension}
d^{\{\lambda_1,\lambda_2\}}=\frac{(\lambda_1+\lambda_2)!(\lambda_1-\lambda_2+1)}{\lambda_2!(\lambda_1+1)!}.
\end{eqnarray}
By simple calculation, one can verify that
\begin{equation}
C_{m+1}^{\{\lambda'_1,\lambda'_2\}}=\frac{(\lambda'_1+\lambda'_2)!(\lambda'_1-\lambda'_2+1)}{\lambda'_2!(\lambda'_1+1)!}=d^{\{\lambda'_1,\lambda'_2\}},
\end{equation}
and thus the proof is accomplished.
\end{proof}

This indicates that an irreducible subspace of permutation group $S_k$ in $\ct$ can be uniquely determined by weight $\omega$ and irreducible representation label $[\lambda]$.

\section{ratio between diagonal terms}\label{diag}
The combinatorial properties of the constant weight condition impose strong constraints on the reduced density matrices\cite{Chen2017Local}
\begin{subequations}
\begin{eqnarray}
\sum_i(k-2\omega-2)\rho^{B_1B_i}_{01,01}&=&\sum_i(k+2\omega)\rho^{B_1B_i}_{00,00},\\
\sum_i(k+2\omega-2)\rho^{B_1B_i}_{10,10}&=&\sum_i(k-2\omega)\rho^{B_1B_i}_{11,11}.
\end{eqnarray}
\end{subequations}
Assume that
\begin{subequations}
\begin{eqnarray}
\sum_i\rho^{B_1B_i}_{00,00}&=&(k-2\omega-2)t_0,\\
\sum_i\rho^{B_1B_i}_{10,10}&=&(k-2\omega)t_1.
\end{eqnarray}
\end{subequations}
Then we can express the matrix elements using $t_0$ and $t_1$
\begin{subequations}
\begin{eqnarray}
\rho^{B_1}_{0,0}&=&\frac{1}{k-1}\sum_i\left(\rho^{B_1B_i}_{00,00}+\rho^{B_1B_i}_{01,01}\right)=2t_0,\\
\rho^{B_1}_{1,1}&=&\frac{1}{k-1}\sum_i\left(\rho^{B_1B_i}_{10,10}+\rho^{B_1B_i}_{11,11}\right)=2t_1,\\
\sum_i\rho^{B_i}_{0,0}&=&\sum_i\rho^{B_1B_i}_{00,00}+\sum_i\rho^{B_1B_i}_{10,10}\notag\\
&=&(k-2\omega-2)t_0+(k-2\omega)t_1,\\
\sum_i\rho^{B_i}_{1,1}&=&\sum_i\rho^{B_1B_i}_{00,00}+\sum_i\rho^{B_1B_i}_{01,01}\notag\\
&=&(k+2\omega)t_0+(k+2\omega-2)t_1.
\end{eqnarray}
\end{subequations}
Compatibility of $k$-symmetric extendible state requires that,
\begin{equation}
\rho^{B_1}_{0,0}:\rho^{B_1}_{1,1}=\rho^{B_i}_{0,0}:\rho^{B_i}_{1,1},
\end{equation}
i.e.
\begin{equation*}
\frac{2t_0}{2t_1}
=\frac{(k-2\omega-2)t_0+(k-2\omega)t_1}{(k+2\omega)t_0+(k+2\omega-2)t_1},
\end{equation*}
which leads to
\begin{equation}
\frac{t_0}{t_1}=\frac{k-2\omega}{k+2\omega},
\end{equation}
here we have dropped the solution $t_0+t_1=0$.

\section{adjusting coefficient from non-bosonic extension to bosonic extension}\label{off diag 1}

Suppose there are two $k$-qubit pure states $\ket{\psi}$ and $\ket{\phi}$, which lie in different constant weight subspaces $V_{\omega}$ and $V_{\omega+1}$,
\begin{subequations}
\begin{eqnarray}
\ket{\psi} &=& \sum_{i_1,\cdots,i_k=0,1}a_{i_1,\cdots,i_k}\ket{{i_1,\cdots,i_k}},\\
\ket{\phi} &=& \sum_{j_1,\cdots,j_k=0,1}b_{j_1,\cdots,j_k}\ket{{j_1,\cdots,j_k}}.
\end{eqnarray}
\end{subequations}
Define $\rho\equiv\ket{\psi}\bra{\phi}$ and consider its $1$-particle partial trace $\{\rho^{(m)}\equiv\Tr_{m^c}\rho\}$.
It is easy to verify that $\rho^{(m)}$ has only $1$ non-zero element
\begin{eqnarray}
\notag
\rho^{(m)} &=& \sum_{\substack{ i_1,\cdots,i_{m-1},i_{m+1}\cdots,i_k \\ j_1,\cdots,j_{m-1},j_{m+1}\cdots,j_k}} a_{i_1,\cdots,i_k}b_{j_1,\cdots,j_k}\delta_{i_m 0}\delta_{j_m 1}\\
& &\times \delta_{i_1j_1} \cdots \delta_{i_{m-1}j_{m-1}}\delta_{i_{m+1}j_{m+1}}\cdots\delta_{i_{k}j_{k}}\ket{0}\bra{1}.
\end{eqnarray}

$J^{(m)}_+$ acts on the $m$th particle, elevating $\ket{0}$ to $\ket{1}$ and annihilating the  $\ket{1}$, hence
\begin{eqnarray}
\notag\Tr(J^{(m)}_+\rho) &=& \sum_{\substack{ i_1,\cdots,i_{m-1},i_{m+1}\cdots,i_k \\ j_1,\cdots,j_{m-1},j_{m+1}\cdots,j_k}} a_{i_1,\cdots,i_k}b_{j_1,\cdots,j_k}\delta_{i_m 0}\delta_{j_m 1}\\
& &\times \delta_{i_1j_1} \cdots \delta_{i_{m-1}j_{m-1}}\delta_{i_{m+1}j_{m+1}}\cdots\delta_{i_{k}j_{k}}.
\end{eqnarray}

Therefore $\rho^{(m)}=\Tr(J^{(m)}_+\rho)\ket{0}\bra{1}$. Moreover,
\begin{equation}
\sum_{m}\rho^{(m)}=\Tr(\mathbf{J}_+\rho)\ket{0}\bra{1}.
\end{equation}

Now recall task in Eq.(\ref{partial}), one immediately obtain
\begin{eqnarray}
&&\frac{1}{d^{[\lambda]}}\Tr_{(B_1)^c}\sum_{\mu}\ket{\omega^{[\lambda]}_{\mu}}\bra{\omega'^{[\lambda]}_{\mu}}\notag\\
&=&\frac{1}{kd^{[\lambda]}}\sum_{i}\Tr_{(B_i)^c}\sum_{\mu}\ket{\omega^{[\lambda]}_{\mu}}\bra{\omega'^{[\lambda]}_{\mu}}\notag\\
&=&\frac{1}{kd^{[\lambda]}}\Tr\left(\mathbf{J}_+\sum_{\mu}\ket{\omega^{[\lambda]}_{\mu}}\bra{\omega'^{[\lambda]}_{\mu}}\right)\ket{0}\bra{1}.
\end{eqnarray}
Since for each given $\mu$, all possible $\{\ket{\omega^{[\lambda]}_{\mu}}\}$ forms the same irreducible $SU(2)$ representation $D^{[\lambda]}$, corresponding to irreducible representation $j=\frac{1}{2}(\lambda_1-\lambda_2)$.
Within such representation, $\mathbf{J}_+$ elevates $\ket{\omega}$ to $\ket{\omega+1}$, for every weight $\omega$ that is not the highest weight.
\begin{eqnarray}
\frac{1}{d^{[\lambda]}}\Tr_{(B_1)^c}\sum_{\mu}\ket{\omega^{[\lambda]}_{\mu}}\bra{\omega'^{[\lambda]}_{\mu}}
=\alpha^{[\lambda]}_{\omega,\omega'}\,\ket{0}\bra{1},
\end{eqnarray}
where
\[
\alpha^{[\lambda]}_{\omega,\omega'}=\frac{\delta_{\omega+1,\omega'}}{k}
\sqrt{(\frac{\lambda_1-\lambda_2}{2}-\omega)(\frac{\lambda_1-\lambda_2}{2}+\omega+1)}.
\]

\section{density matrix given in Eq.(\ref{positive}) could be positive definite}\label{off diag 2}
Notice that $\sigma^{AB_1B_2\cdots B_k}$ in Eq.(\ref{positive}) is a convex combination of different $[\lambda]$ ingredients:
\begin{eqnarray}
\sigma^{A B_1 B_2 \cdots B_k}=\sum_{[\lambda]}\sigma^{AB_1B_2\cdots B_k}_{[\lambda]},
\end{eqnarray}
where
\begin{eqnarray}
&&\sigma^{AB_1B_2\cdots B_k}_{[\lambda]} \equiv \sum_{\omega,\omega'}
\left(\sum_{\alpha,\alpha'}
|\psi^{\alpha}_{\omega,[\lambda]}\rangle\langle\psi^{\alpha'}_{\omega',[\lambda]}|\right)\notag\\
&&\otimes p^{[\lambda]}_{\omega,\omega'} |\omega^{S}\rangle\langle\omega'^{S}|,
\end{eqnarray}
here $p^{[\lambda]}_{\omega,\omega'}$ is defined in Eq.(\ref{p-matrix}).
Therefore, it is sufficient to verify that $\sigma^{AB_1B_2\cdots B_k}_{[\lambda]}$ can be a positive definite matrix.

We give a construction as below:
\begin{eqnarray}\label{construction}
\sigma^{A B_1 B_2 \cdots B_k}_{[\lambda]}&=&
\left(\sum_{\omega}\xi^{[\lambda]}_{\omega}|\omega^S\rangle\right)
\left(\sum_{\omega'}\xi^{*[\lambda]}_{\omega'}\langle\omega'^S|\right)
\notag\\
&&\,\,+\sum_{\omega}(1-|\xi^{[\lambda]}_{\omega}|^2)
|\omega^S\rangle\langle\omega^S|,
\end{eqnarray}
where
\begin{eqnarray}
\xi^{[\lambda]}_{\omega}\xi^{*[\lambda]}_{\omega+1} &  =   &   p^{[\lambda]}_{\omega,\omega+1},\label{recursion}\\
|\xi^{[\lambda]}_{\omega}|^2  & \leq &   1.\label{constrain}
\end{eqnarray}
The first term of right hand side in Eq.(\ref{construction}) contributes all the off-diagonal terms of $\sigma^{A B_1 B_2 \cdots B_k}_{[\lambda]}$ and the second term consists of purely diagonal terms. The rest part is to give an explicit example of $\{\xi^{[\lambda]}_{\omega}\}$.

Set $\xi^{[\lambda]}_{\frac{k}{2}+1}=1$ and $\xi^{[\lambda]}_{\frac{k+1}{2}}=\xi^{[\lambda]}_{\frac{k+1}{2}}=
\sqrt{p^{[\lambda]}_{\frac{k-1}{2},\frac{k+1}{2}}}$ for even and odd $k$, the rest $\xi^{[\lambda]}_{\omega}$ can be obtained by iteration relation Eq.(\ref{recursion})

Eq.(\ref{constrain}) is satisfied due to the fact that
\begin{itemize}
\item[a.] The maximum value of $p^{[\lambda]}_{\omega,\omega+1}$ is obtained at $\omega_m=\frac{k}{2}+1$ and $\omega_m=\frac{k\pm 1}{2}$ when $k$ is even and odd respectively.
\item[b.] $p^{[\lambda]}_{\omega,\omega+1}$ increases strictly when $\omega < \omega_m$ and decreases strictly when $\omega> \omega_m$.
\end{itemize}

It is straightforward to verify that $\sigma^{A B_1 B_2 \cdots B_k}_{[\lambda]}$ is positive definite as long as its counterpart in the non-bosonic extension is. Therefore, $\sigma^{A B_1 B_2 \cdots B_k}$ can always be positive definite.

\end{document}